\newtheorem{theorem}{Theorem}
\newtheorem{lemma}[theorem]{Lemma}
\newcommand{\Xomit}[1]{ }
\newenvironment{proof}[1][Proof]{\textbf{#1.} }{\ \rule{0.5em}{0.5em}}
\mathchardef\mhyphen="2D
\newcommand{\eps}{\upvarepsilon}
\def\eps{\varepsilon}
\begin{document}




\title{Several methods of analysis \\ for cardinality constrained bin packing}

\date{}

\author{Leah Epstein\thanks{
Department of Mathematics, University of Haifa, Haifa, Israel.
\texttt{lea@math.haifa.ac.il}. }}

\maketitle



\begin{abstract}
We consider a known variant of bin packing called {\it cardinality constrained bin packing}, also called {\it bin packing with cardinality constraints} (BPCC). In this problem, there is a parameter $k\geq 2$, and items of rational sizes in $[0,1]$ are to be packed into bins, such that no bin has more than $k$ items or total size larger than $1$. The goal is to minimize the number of bins.

A recently introduced concept, called the price of clustering, deals with inputs that are presented in a way that they are split into clusters. Thus, an item has two attributes which are its size and its cluster.
The goal is to measure the relation between an optimal solution that cannot combine items of different clusters into bins, and an optimal solution that can combine items of different clusters arbitrarily. Usually the number of clusters may be large, while clusters are relatively small, though not trivially small.
Such problems are related to greedy bin packing algorithms, and to batched bin packing, which is similar to the price of clustering, but there is a constant number of large clusters.
We analyze the price of clustering for BPCC, including the parametric case with bounded item sizes. We discuss several greedy algorithms for this problem that were not studied in the past, and comment on batched bin packing.
\end{abstract}

\section{Introduction}
The bin packing problem is defined as follows. Given items of rational sizes in $[0,1]$, partition these items to subsets of total sizes not exceeding $1$ so as to minimize the number of subsets. The subsets are called bins, and the process of assigning an item to a subset is called packing. In the classic or standard variant, there are no other attributes or conditions.
This problem has been studied for fifty years \cite{J74,JoDUGG74,FerLue81,BC81,KK82,LeeLee85,RaBrLL89,Vliet92,BBG,DS12,DLHT,Dosa15,BBDEL_ESA18,BBDEL_newlb,BBDSS19}.
Another well-known natural variant limits the number of items that one bin can receive. Given an integer parameter $k\geq 2$, bin packing with cardinality constraints (BPCC) is the variant of bin packing where every bin is required to have at most $k$ items \cite{KSS75,KP99,CKP03,BCKK04,Epstein05,EL07afptas,FK13,BDE,DosaE18,BBDEL_ESA}. In this work we study BPCC with respect to new and old concepts.
One well-studied concept is greedy algorithms. Such algorithms are frequently online, in the sense that they pack every item before seeing the future items. The problems were studied as offline problems, where an algorithm  receives the entire input as a set, and as online problems, where an algorithm receives the input as a sequence. Here, we study not only algorithms but other concepts (defined below) as well.

\vspace{0.1cm}

\noindent{\bf Measures for bin packing.}
For an algorithm for a certain problem, that does not necessarily compute an optimal solution, we say that its asymptotic approximation ratio does not exceed $R$, if there
exists a constant value $C\geq 0$ (this value has to be independent of the input),
such that for any input $J$ for the problem, the cost of this algorithm for input $J$
is bounded from above by the following value: $R$ times the optimal cost
for $J$ plus  the constant $C$. The asymptotic approximation ratio is the infimum value $R$ for which this inequality holds for every input $J$.
If the constant $C$ is equal to zero, the approximation ratio will be called strict or absolute.
One specific optimal (offline) algorithm for the problem and its cost are usually denoted by
$OPT$, and $OPT(J)$ is also used for the cost for a fixed input $J$. Another
definition of the asymptotic approximation ratio is the supreme
limit of the ratio between the cost of the algorithm and the cost of $OPT$, as a
function of the second cost, where we take the maximum or supremum over
the inputs with the same optimal cost.
For online algorithms, we still use the term {\it approximation ratio}, and the definition is unchanged. The term {\it competitive ratio} has the same meaning in the literature.
The asymptotic measures are considered to be more meaningful for bin packing problems, and thus we are mostly interested in those asymptotic measures. In the cases where we do not specify whether the measure is asymptotic or absolute, the result holds for both these measures. We will use such an analysis throughout the article.
\vspace{0.1cm}

\noindent{\bf Parametric variants for bin packing problems.}
It is often assumed that the size of an item may be very close to $1$ or even equal to $1$. There are applications where this may happen, but in other applications, bin capacities are much larger than item sizes. The parametric case is defined by an upper bound on item sizes. Specifically, there is a parameter $\beta$,  where $0<\beta\leq 1$, such that items sizes are rational numbers in $[0,\beta]$. For some versions of bin packing and some algorithms for it, the study of the parametric case is different from the general case, and it is of interest. There are models where  the interesting cases are only those where $\frac{1}{\beta}$ is an integer, but there may be additional relevant values of $\beta$.
The resulting asymptotic approximation ratio may or may not be close to $1$ for small values of $\beta$ \cite{JoDUGG74,J74,ACFR16}.
\vspace{0.1cm}

\noindent{\bf Bin packing with item types: Batched bin packing.}
An item type is an additional attribute of an item. This attribute corresponds to a fixed partition of the input into subsets called batches or clusters. The number of batches or clusters will be denoted by $\ell$, where $\ell\geq 1$.
Batched bin packing is a model that was initially defined as a semi-online input arrival scenario for bin packing problems. In this scenario, an algorithm receives all items of one type together. There is work on the version where the algorithm can pack items of different types into one bin (so the difficulty is the lack of knowledge on future batches) \cite{GJYbatch,BBDGT,Do15,Epstein16}, and here we consider the variant where items of different batches are to be packed into separate bins  \cite{Do15,Epstein16}.  Studies focus on the difference between solutions where items of different batches are packed separately (batched solutions), and globally optimal solutions that can pack items of different types into the same bin. We use the asymptotic approximation ratio to compare a batched optimal solution with a globally optimal solution. This is a comparison between optimal solutions, but sometimes it is easier to analyze greedy solutions instead.
The problem was studied for several variants of bin packing \cite{Epstein16,E21}, and it turns out that for $\ell=2$, the tight bound on the asymptotic approximation ratio is $1.5$ \cite{Epstein16}, and for larger values of $\ell$ the ratio increases and grows to approximately $1.69103$ (see \cite{LeeLee85} for an early article where this value appeared in another context of bounded space online bin packing).
\vspace{0.1cm}

\noindent{\bf Bin packing with item types: The price of clustering.}
This problem is similar to batched bin packing, only here $\ell$ may be large while sets of items of one type may be small. A solution where different item types have separate bins is called a clustered solution, and it is compared to a globally optimal solution in this case as well. Thus, a clustered optimal solution is compared to a globally optimal solution using the approximation ratio once again, only here both the absolute approximation ratio and the asymptotic approximation ratio are studied. This approximation ratio is called {\it The price of clustering} (PoC) \cite{AESV,E20}.  Azar et al. \cite{AESV} who introduced this problem, write that (in order for the measure to be meaningful) clusters may be small, but they cannot be arbitrarily small. The assumption that an optimal solution for every cluster has at least two bin leads to tight bounds of $2$ on the PoC \cite{AESV}. This bound is tight even if items can be very small. For the cases where the optimal solution for every cluster requires at least $q$ bins with $q\geq 3$, the approximation ratio is strictly smaller than $2$  \cite{AESV,E20} and decreases to approximately $1.69103$ as $q$ grows \cite{Epstein16,AESV}. The PoC was also studied for other variants of bin packing \cite{E21}. Here, in the study of CCBP, we use the smallest lower bound on $q$, i.e., we let $q=2$. Our assumption for every cluster is that the optimal cost for it is at least $2$. Due to the cardinality constraints, the total size of items can still be not larger than $1$ for some clusters, and for such a cluster it will hold that the number of items is at least $q+1$.
\vspace{0.1cm}

\noindent{\bf Notation for bin packing problems with types.}
Given an input, recall that $\ell$ is the number of clusters or batches. We let
$OPT_i$ be the number of bins in an optimal solution for the $i$th
cluster or batch, and we denote the input (the set of items of type $i$) by $I_i$. We let $I$ be the set of all items, that is,
$I=\bigcup_{1 \leq i \leq \ell} I_i$, where $n=|I|$. We use the notation above for optimal solutions, and let $OPT$ be
a globally optimal solution for $I$, as well as its cost. We let $A_i$ be the number of bins in the output of a fixed algorithm (which can any algorithm, an optimal algorithm or another algorithm) for cluster $i$.
\vspace{0.1cm}

\noindent{\bf Greedy algorithms and related work.}
The study of greedy algorithms for bin packing started together with the first studies of classic bin packing and  \cite{JoDUGG74,J74}.
Next Fit (NF) is an efficient algorithm that has (at most) one active bin where items are packed, and once an item cannot be added to the bin, it is closed and replaced with a new active bin.
Any Fit (AF) algorithms pack an item into a non-empty bin when possible, and otherwise a new bin is used. A specific AF algorithm can be defined based on the choice of bin when there are several options. First Fit (FF) chooses the first (minimum index) bin, while Worst Fit (WF) chooses the least packed bin in terms of total size \cite{J74,JoDUGG74}.
It is known that the asymptotic and absolute approximation ratio of FF is $1.7$ \cite{JoDUGG74,DS12}. For WF (and AF in general) and NF, this ratio is equal to $2$ \cite{J74}.
These algorithms (for classic bin packing) and online algorithms were sometimes analyzed for the parametric case as well  \cite{JoDUGG74,J74,Vliet92,BBG}.
The asymptotic approximation ratio of FF with $\beta \in (\frac 1{t+1},\frac 1t]$ for an integer $t\geq 2$ is $\frac{t+1}t$ \cite{JoDUGG74},
and the case where the parameter $\beta \in (\frac 1{t+1},\frac 1t)$ is not different from the case $\beta = \frac 1t$ (also for $t=1$). For NF and WF, the asymptotic competitive ratio is $2$ for $\beta>\frac 12$. while for $\beta\leq \frac 12$, the ratio is exactly $\frac{1}{1-\beta}$ \cite{J74}.

Greedy algorithms can be applied to CCBP. The algorithm First Fit (FF) for this variant is defined as follows. Every item in the list is packed into the bin of the minimum index such that the bin has sufficient space for it and less than $k$ items. Similarly, we can adapt other greedy algorithms, where the concept of the possibility of packing an item into a bin is tested with respect to the total size (which cannot exceed $1$ together with the new item) and the number of items (which cannot exceed $k$ together with the new item, that is it has to be at most $k-1$ prior to the packing).
CCBP was analyzed for greedy algorithms already in the article where the problem was introduced \cite{KSS75}. In that work, FF and its sorted version FFD were studied. For FF, the asymptotic approximation ratio tends to $2.7$ as $k$ grows. The exact asymptotic approximation ratio for every value of $k$ was found much later, and it consists of several cases according to the value of $k$ \cite{DosaE18}, where, in particular, for $k\geq 10$ the ratio is $2.7-\frac{3}k$.
Note that greedy algorithms do not typically have the best possible asymptotic approximation ratios for online algorithms for bin packing problems \cite{BBDEL_ESA18,BCKK04,KSS75,JoDUGG74,J74}.
\vspace{0.1cm}

\noindent{\bf Structure of the paper and results.}
In this work we study two kinds of problems. The first one is CCBP with types, where we study the price of clustering and batched bin packing. The second one is greedy algorithms. In most cases, the analysis includes the parametric case. Specifically, in Section \ref{typ} we find the exact price of clustering for any $k$ and $\beta$ (and $q=2$), we analyze batched bin packing with two and three batches, and we explain why large numbers of batches are uninteresting. In Section \ref{gree} we analyze NF and WF for the general case, and using sample cases for $\beta$ we exhibit the difficulty of analysis for the parametric case. We also provide a simple analysis of the parametric case for FF, for which the general case was studied in the past \cite{KSS75,DosaE18}.




\section{The price of clustering and batched bin packing}\label{typ}
We start with a complete study of the PoC. Recall that we assume that for every cluster the optimal solution has at least two bins.
For the analysis of upper bounds, we consider the action of greedy algorithms on the clusters, since the output of such an algorithm for each cluster has a simpler structure compared to an optimal solution for it, and the number of bins obviously cannot be smaller.
We will use weight functions in the analysis. Such a function can relate two solutions as follows. We let the weight of a bin be the sum of weights of its items.
If in one solution the total weight of every bin is at most $\rho$ and in another solution the total weight for every bin is at least $1$ (where it is sufficient to prove this on average), then the ratio between the costs of the two solutions (for one input, for which the total weight is fixed) is at most $\rho$. If there is an additive term, that is, the requirement holds possibly excluding a constant number of bins, this results in an upper bound on the asymptotic approximation ratio.

Recall that in the parametric case we assume that item sizes are in $(0,\beta]$ for some $0<\beta\leq 1$. We let $t=\lfloor \frac{1}{\beta}\rfloor$, so all items have sizes of at most $\beta \leq \frac 1{t}$, but $\beta>\frac{1}{t+1}$. We are interested in all integers $t\geq 1$.
As discussed earlier, for classic bin packing, even if all items are very small, the PoC for the case where optimal solutions for clusters have at least two bins is unchanged.
Now, we will see that for CCBP the bounds become slightly smaller as $\beta$ decreases. Specifically, we will find a function of both $t$ and $k$ such that the PoC is equal to this function. The bounds are between approximately $2$ and approximately $4$, for large values of $k$.
As mentioned above, for certain bin packing problems and their analysis as a function of $\beta$ (or $t$), it is sometimes the case that the ratio depends not only on $t$ but it depends on the exact value of $\beta$. \cite{E21}, but here were prove that the dependence is just on $t$, and the ratio is equal for all values $\beta$ with the same value of $t$.
In the next theorem we provide a complete analysis of the PoC for the general case and the parametric case.

\begin{theorem}
The PoC for CCBP is equal to $\frac{4k-2}{k+1}$.
For the parametric case, the PoC is equal to  $4-\frac{2t+4}{k+1}=\frac{4k-2t}{k+1}$ if $t\leq k$, and to $\frac{2k}{k+1}$ if $t\geq k$.
\end{theorem}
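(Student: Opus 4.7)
The plan is to establish matching upper and lower bounds by combining a weight-function argument with explicit tight instances. The upper bound exploits the classical fact that no two bins in an optimal packing can be merged, applied pairwise to every pair of bins of each cluster's OPT. The lower bound uses staggered-size cluster families generalizing the standard PoC $=2$ construction for plain bin packing.

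For the upper bound, assign each item of size $s$ the weight $w(s)=1+\gamma s$, where $\gamma=k-t$ if $t\le k$ and $\gamma=0$ if $t\ge k$. A global OPT bin, with at most $k$ items and total size at most $1$ (and at most $k/t$ when $t\ge k$), has weight at most $2k-t$ in the first regime and at most $k$ in the second. On the clustered side, in the OPT packing of cluster $i$ with $m_i$ bins, no two bins can be merged, so for every pair $B,B'$ of bins we have $|B|+|B'|\ge k+1$ or $s(B)+s(B')>1$. The size bound $1/t$ on items also gives $|B|+|B'|\ge t+1$ in the second case, and a direct check shows that the pair weight $w(B)+w(B')$ is at least $k+1$ in either case. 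Summing this pairwise inequality over all $\binom{m_i}{2}$ pairs of OPT bins (each bin appears in $m_i-1$ pairs) yields $(m_i-1)\,W_i\ge\binom{m_i}{2}(k+1)$, hence $\mathrm{OPT}_i\le\frac{2W_i}{k+1}$. Aggregating over clusters gives clustered $\mathrm{OPT}\le\frac{2W}{k+1}$, which combined with the global lower bound $\mathrm{OPT}\ge W/(2k-t)$ (respectively $W/k$) yields exactly the claimed ratio.

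For the lower bound, the regime $t\ge k$ is handled immediately by taking $\ell$ clusters each containing $k+1$ tiny items, so $\mathrm{OPT}_i=2$ by cardinality while globally the items repack $k$ per bin for ratio $\to 2k/(k+1)$. For the regime $t\le k$, I use a staggered construction, exemplified for $k=2,\,t=1$ by placing items $(i\epsilon,\,1-i\epsilon+\delta)$ with $0<\delta<\epsilon$ in cluster $i$: within each cluster the two items sum to slightly more than $1$, so $\mathrm{OPT}_i=2$, but across clusters the pairs $(a_i,b_{i+1})$ sum to just below $1$, yielding $\ell-1$ paired global bins plus a constant number of singletons and ratio $\to 2=\frac{4k-2}{k+1}$. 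The general-$k$ construction extends this to $k$-wise cross-cluster combinations that jointly fill a near-capacity global bin; the main technical obstacle is choosing the item sizes across clusters so that the per-cluster weight bound $(k+1)/2$ per bin and the global-bin weight bound $2k-t$ are simultaneously tight, which requires mixing items of distinct sizes across clusters and verifying that no better global repacking exists.
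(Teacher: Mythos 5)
Your upper bound is correct and takes a genuinely different route from the paper. The paper analyzes a First Fit packing of each cluster (arguing that FF uses at least as many bins as the cluster optimum and has a more tractable structure), which forces a case analysis on how many FF bins are short of $k$ items. You instead work directly with each cluster's optimal packing and use only the fact that no two of its bins can be merged: for every pair $B,B'$ either $|B|+|B'|\ge k+1$ or $s(B)+s(B')>1$, and in the latter case the size bound $1/t$ forces $|B|+|B'|\ge t+1$, so with $w(s)=1+(k-t)s$ every pair has weight at least $k+1$. Summing over the $\binom{m_i}{2}$ pairs and dividing by $m_i-1$ (legitimate since $m_i\ge 2$ by the standing assumption $q=2$) gives $m_i\le \frac{2W_i}{k+1}$, and comparing with the global bound $W\le (2k-t)\,\OPT$ yields exactly $\frac{4k-2t}{k+1}$. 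Your weight function is the paper's up to the scalar $\frac{2}{k+1}$, but the pairwise argument is cleaner and avoids the FF machinery entirely; amusingly, it is the same averaging-over-pairs trick the paper itself deploys only for its vector packing lemma. Your argument also yields the bound for the clustered \emph{optimum} directly rather than via an algorithmic upper bound on it.

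The lower bound, however, has a genuine gap in the regime $t<k$. The case $t\ge k$ is fine ($\ell$ clusters of $k+1$ tiny items), and your staggered construction is complete for $k=2$, $t=1$. But for general $k$ you only assert that the construction ``extends to $k$-wise cross-cluster combinations'' and explicitly defer the ``main technical obstacle'' of making both weight bounds simultaneously tight. This is precisely the part that needs an idea: a global bin must carry weight close to $2k-t$, which a bin of $k$ staggered near-$\frac{1}{t+1}$ items cannot do (its size is near $\frac{k}{t+1}$, typically exceeding $1$, and even when feasible it does not saturate both the item-count and the size terms in the right proportion). The resolution, as in the paper, is to use two kinds of clusters: $N(k-t-1)$-many clusters of $k+1$ near-zero items (forced to two bins by cardinality alone), and clusters containing a staggered pair summing to just over $\frac{2}{t+1}$ together with $t-1$ items of size exactly $\frac{1}{t+1}$ (forced to two bins by size alone); the global packing then places in each bin one staggered pair, $t-1$ items of size $\frac{1}{t+1}$, and $k-t-1$ near-zero items. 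Without specifying this mix and the resulting cluster count $N(2k-t)-O(1)$ against a global optimum of $N(k+1)$, the lower bound $\frac{4k-2t}{k+1}$ is not established for $k\ge 3$ or $t\ge 2$.
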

\begin{proof}
The the proof of the theorem consists of four lemmas. The first two lemma provide a proof for the general case, and the next two (whose proofs appear in the Appendix) extend the result for the parametric case.
The first lemma is a proof of the upper bound for the general case.
\begin{lemma}\label{ll1}
For any input $I$ for CCBP with the parameter $k \geq 2$, it holds that $\sum_{j=1}^{\ell} OPT_j \leq (4-\frac{6}{k+1})\cdot OPT=\frac{4k-2}{k+1}\cdot OPT$.
\end{lemma}
\begin{proof}
We use the following weight function $w(x):[0,1)\rightarrow (0,2)$, where $$w(x)=\frac{(2k-2)x+2}{k+1}=\frac{2(k-1)x}{k+1}+\frac 2{k+1} \ . $$ The motivation for it is that due to the cardinality constraint, the weight for a bin should be affected both by the total size and the number of items.

For a bin of the optimal solution, the total size is at most $1$ and the number of items is at most $k$, so the total weight is at most $\frac{(2k-2)+2k}{k+1}=\frac{4k-2}{k+1}$.

Consider a cluster for $I_j$, and an solution of FF for it, which contains $A_j\geq 2$ bins. If all bins, possibly excluding the last bin, have $k$ items, the number of items for $I_j$ is at least $k\cdot (A_j-1)+1$, and
total weight is at least $\frac {2}{k+1} \cdot (k\cdot A_j-k+1)= A_j+\frac{k-1}{k+1}\cdot A_j-\frac{2k-2}{k+1}$. By $A_j\geq 2$ we get $\frac{k-1}{k+1}\cdot A_j-\frac{2k-2}{k+1}\geq 0$, and the total weight for the cluster is at least $A_j$.

Otherwise, let $\alpha_j$ (where $1 \leq \alpha \leq A_i-1$) be the number of bins that are not the last bin, such that every such bin has less than $k$ items at termination. For these bins together with the last bin we have that every pair of bins have a total size above $1$ together. This holds because the first item of the bin with the larger index of the two was not packed into the bin with the smaller index, though that bin had less than $k$ items at termination and therefore also at the time of packing for this item. Thus, the total size of items in these $\alpha_j+1$ bins is above $\frac{\alpha_j+1}2$. Every bin has at least one item, so these $\alpha_j$ bins together have at least $\alpha_j$ items, while the other $A_j-\alpha_j-1 \geq 0$ bins have $k$ items each. The number of items is at least $k\cdot A_j-(k-1)(\alpha_j+1)$, and the total weight is at least $$\frac{1}{k+1}(2(k-1)\cdot \frac{\alpha_j+1}2+2\cdot(k\cdot A_j-(k-1)(\alpha_j+1)))$$ $$=\frac{1}{k+1}((k-1)\alpha_j+k-1+2kA_j-2(k-1)\alpha_j-2(k-1))=A_j+\frac{k-1}{k+1}(A_j-\alpha_j-1) \geq A_j \ . $$

In both cases, the average weight for any bin of the clustered solution is at least $1$, and the weight of any bin of a globally optimal solution is at most $\frac{4k-2}{k+1}$, which implies the claim.
\end{proof}

The second lemma is a proof of the lower bound for the general case.
\begin{lemma}\label{ll2}
The PoC of CCBP with the parameter $k \geq 2$ is at least $4-\frac{6}{k+1}=\frac{4k-2}{k+1}$.
\end{lemma}
\begin{proof}
Let $N>k$ be a large positive integer.

The input consists of $N(k+1)(k-2)$ small items whose sizes are equal to $\frac{1}{N^5}$, and for all values of $i$ such that $1\leq i \leq N(k+1)$, there are two items of sizes $\frac 12+\frac{k\cdot i}{N^5}$ and $\frac 12-\frac{k\cdot (i+1)}{N^5}$. An optimal solution for the entire input packs $N(k+1)$ bins, where the $i$th bin has the two items of sizes $\frac 12+\frac{k\cdot i}{N^5}$ and $\frac 12-\frac{k\cdot (i+1)}{N^5}$, whose total size is $1-\frac{k}{N^5}$, and $k-2$ items small items, whose total size is below $\frac{k}{N^5}$.

The number of clusters will be $N(2k-1)-1$. There are $N(k-2)$ clusters, each with $k+1$ small items, $N(k+1)-2$ clusters with a pair of items $\frac 12+\frac{k\cdot i}{N^5}$ and $\frac 12-\frac{k\cdot (i-1)}{N^5}$ for $3\leq i \leq N(k+1)$, and one cluster with four remaining items of sizes $$\frac 12+\frac{k}{N^5} \ , \ \ \ \frac 12+\frac{2k}{N^5}  \ , \ \ \ \frac 12-\frac{k(N(k+1)+1)}{N^5}\ , \mbox{\ \ \ and \ \ \ } \frac 12-\frac{k(N(k+1))}{N^5} \ . $$

For the clusters with small items, the number of items is larger than $k$, and therefore the optimal cost for every cluster is $2$. For the other clusters, the total size of items is at least $1+\frac{k}{N^5}$, and therefore the optimal cost is also above $1$. Thus, the total number of bins is at least  $2(N(2k-1)-1)$. In fact, one can pack every cluster into two bins. The cost of an optimal solution for the entire input is $N(k+1)$, so the PoC is at least $\frac{4Nk-2N-2}{Nk+N}=\frac{4k-2-2/N}{k+1}$, which tends to  $\frac{4k-2}{k+1}$ as $N$ grows to infinity.
\end{proof}

For any $\beta$ such that $t=1$, we already found a tight result of $\frac{4k-2}{k+1}$ on the PoC. The upper bound of Lemma \ref{ll1} obviously holds, and the lower bound construction of Lemma \ref{ll2} holds since we can use $N>\frac{1}{\beta-1/2}$, which means that items of sizes just above $\frac 12$ are chosen such that they are still smaller than $\beta$.
Thus, we consider the case $t\geq 2$ in what follows, both for the upper bound and the lower bound.

\begin{lemma}\label{ex1}
For any input $I$ for CCBP with the parameter $k \geq 2$, and any $t\geq 2$, it holds that $\sum_{j=1}^{\ell} OPT_j \leq (4-\frac{2t+4}{k+1})\cdot OPT=\frac{4k-2t}{k+1}\cdot OPT$ for $k\geq t$ and  $\sum_{j=1}^{\ell} OPT_j \leq \frac{2k}{k+1}\cdot OPT$ if $k\leq t$.
\end{lemma}
\begin{proof}
We will analyze clusters via FF once again, so we can use properties of FF in the analysis.

For $k \leq t$, we use the weight function $w(x):[0,1)\rightarrow (0,1)$, where $w(x)=\frac{2}{k+1}$. In this case, for every cluster every bin except for possibly one bin will have at least $k$ items. In this case the total size of items for every bin may be arbitrarily small.

For $k \geq t$, we use the weight function $w(x):[0,1)\rightarrow (0,2)$, where $w(x)=\frac{(2k-2t)x+2}{k+1}=\frac{2(k-t)x}{k+1}+\frac 2{k+1}$. In this case, once again the motivation is that the weight for a bin should be affected both by the total size and the number of items. However, in the parametric case, for every cluster, every bin except for possibly one bin will have at least $t$ items, since no item is larger than $\frac 1t$, and therefore FF will not use an empty bin while there is a bin with less than $t$ items.

For a bin of the optimal solution, in the first case there are at most $k$ items and the total weight is at most $\frac{2k}{k+1}$. In the second case,
the total size is at most $1$ and the number of items is at most $k$, so the total weight is at most $\frac{(2k-2t)+2k}{k+1}=\frac{4k-2t}{k+1}$.

Consider a cluster for $I_j$, and an solution of FF for it, which contains $A_j\geq 2$ bins. Every bin except for possibly the last one has $\min\{k,t\}$ items, by the definition of FF.
If every bin excluding the last one has $k$ items, then the total weight is at least $\frac{2}{k+1}\cdot (k(A_j-1)+1) \geq A_j$, by $A_j\geq 2$.

In the case $k \leq t$, indeed every bin except for possibly the last one has $k$ items, so the total weight is at least $A_j$.
We consider the case $k \geq t$, in which every bin except for possibly the last one has at least $t$ items.
If all bins, possibly excluding the last bin, have $k$ items, the total weight is at least $A_j$, and we focus on the case where this does not hold.

Let $\alpha_j \geq 1$ be the number of bins that are not the last bin, where less than $k$ items are packed. For these bins together with the last bin we have that every pair of bins have a total size above $1$ together. This holds because the first item of the bin with the larger index of the two was not packed into the bin with the smaller index, though that bin had less than $k$ items at the time of packing of this item. Moreover, since every item has size not exceeding $\frac 1{t}$, each one of the bins that are not the last bin has a total size of items above $1-\frac 1t$. Thus, $\alpha_j-1 \geq 0$ bins have a total size above $1-\frac 1t$ and the last bin together with another bin have a total size above $1$, and the total size of items in these $\alpha_j+1$ bins is above $(\alpha_j-1)\cdot (1-\frac 1t)+1$. All these bins excluding the last one have at least $t$ items each (for the last bin we can only conclude that it has at least one item), while the other $A_j-\alpha_j-1 \geq 0$ bins have $k$ items each. The number of items is at least $k\cdot A_j-(k-t)\alpha_j-(k-1)$, and the total weight is at least $$\frac{1}{k+1}((2(k-t)((\alpha_j-1)(1-1/t)+1))+2\cdot(k\cdot A_j-(k-t)\alpha_j-(k-1))) \ . $$ We will show that the weight is at least $A_j$ by showing that this expression minus $A_j$ is not negative, i.e.,
$$\frac{1}{k+1}((2(k-t)(\alpha_j-1)(1-1/t)+2(k-t)+(k-1)\cdot A_j-2(k-t)\alpha_j-2(k-1))\geq 0 \ . $$
By $A_j \geq \alpha_j-1 $, it is sufficient to show that $$2(k-t)(\alpha_j-1)(1-1/t)+2(k-t)+(k-1)\cdot (\alpha_j+1)-2(k-t)\alpha_j-2(k-1)\geq 0$$ holds, which is equivalent to
$$2(k-t)((\alpha_j-1)(1-1/t)+1-\alpha_j)+(k-1)\cdot (\alpha_j-1)\geq 0 \ , $$ and to $$(\alpha_j-1)(2(k-t)(-1/t)+(k-1))\geq 0 \ . $$ Since $\alpha_j\geq 1$ it is left to prove that $2(k-t)(-1/t)+(k-1) \geq 0$. Since $t\geq 2$, we have
$$2(k-t)(-1/t)+(k-1)=-\frac{2k}t+2+k-1\geq 1 \ . $$\end{proof}

\begin{lemma}\label{ex2}
The PoC of CCBP with the parameter $k \geq 2$ and any $t \geq 2$ is at least $4-\frac{2t+4}{k+1}=\frac{4k-2t}{k+1}$ if $t\leq k$, and at least  $\frac{2k}{k+1}$ if $t\geq k$.
\end{lemma}
\begin{proof}
Let $N>k$ be a large positive integer. We also require that $N>\frac{1}{\beta-1/(t+1)}$, so that all sizes will not exceed $\beta$.

In the case $t\geq k$, there are only items of sizes $\frac 1t$, where $\frac 1t\leq \frac 1k$ and $k \cdot \frac 1t \leq 1$. The number of such items is $k(k+1)\cdot N$. An optimal solution for the entire input has $(k+1)N$ bins with $k$ items packed into each bin. Every cluster will have $k+1$ items, and there are $kN$ clusters. An optimal solution for every cluster has two bins. Thus, the PoC is at least $\frac{2kN}{(k+1)N}=\frac{2k}{k+1}$, as required. Note that the construction gives a PoC of $\frac{2k}{k+1}$ already for $N=1$, but using large values of $k$ provides us with an asymptotic lower bound rather than an absolute one.

In the case $t\leq k-1$, the input consists of $N(k+1)(k-t-1)$ small items whose sizes are equal to $\frac{1}{N^5}$, $N(k+1)(t-1)$ items of size $\frac{1}{t+1}$, and for all values of $i$ such that $1\leq i \leq N(k+1)$, there are two items of sizes $\frac 1{t+1}+\frac{k\cdot i}{N^5}$ and $\frac 1{t+1}-\frac{k\cdot (i+1)}{N^5}$. An optimal solution for the entire input packs $N(k+1)$ bins, where the $i$th bin has the two items of sizes $\frac 1{t+1}+\frac{k\cdot i}{N^5}$ and $\frac 1{t+1}-\frac{k\cdot (i+1)}{N^5}$, whose total size is $\frac{2}{t+1}-\frac{k}{N^5}$, $t-1$ items of sizes $\frac{1}{t+1}$, and $k-t-1$ items small items, whose total size is below $\frac{k}{N^5}$.

The number of clusters will be $N(2k-t)-1$. There are $N(k-t-1)$ clusters, each with $k+1$ small items, $N(k+1)-2$ clusters with $t+1$ items each, out of which $t-1$ have size $\frac{1}{t+1}$, and there is also a pair of items $\frac 1{t+1}+\frac{k\cdot i}{N^5}$ and $\frac 1{t+1}-\frac{k\cdot (i-1)}{N^5}$ for $3\leq i \leq N(k+1)$, and one cluster with four remaining items of sizes $\frac 1{t+1}+\frac{k}{N^5}$, $\frac 1{t+1}+\frac{2k}{N^5}$, $\frac 1{t+1}-\frac{k(N(k+1)+1)}{N^5}$, and $\frac 1{t+1}-\frac{k(N(k+1))}{N^5}$, and the remaining $2t-2$ items of size $\frac{1}{t+1}$.

For the clusters with small items, the number of items is larger than $k$, and therefore the optimal cost for every cluster is $2$. For the other clusters, the total size of items is above $1+\frac{k}{N^5}$, and therefore the optimal cost is also above $1$. Thus, the total number of bins is at least  $2(N(2k-t)-1)$ since one can pack every cluster into two bins. The cost of an optimal solution for the entire input is $N(k+1)$, so the PoC is at least $\frac{4Nk-2Nt-2}{Nk+N}=\frac{4k-2t-2/N}{k+1}$, which tends to  $\frac{4k-2t}{k+1}$ as $N$ grows to infinity.
\end{proof}

Combining the last two lemma concludes the proof of the theorem.
\end{proof}

\vspace{0.1cm}

Next, we consider batched bin packing with a small number of batches.
We will analyze the case with two and three batches, as a function of $k$. For a large number of batches the problem becomes similar to batch bin packing for classic bin packing \cite{Do15,Epstein16}, though the bound is larger by an additive $1$.
Specifically, it is known that for CCBP, for large values of $k$, the ratios may grow by an additive factor of $1$, since small items are sometimes treated almost independently \cite{Epstein05}. For FF and other greedy algorithms, in the bad examples, these items can be presented first, and in the analysis, bins with $k$ items can be considered separately \cite{KSS75,DosaE18}. This property of an additive $1$ in the ratio is not true for online algorithms in general \cite{BCKK04,BDE}, where the tight asymptotic approximation ratio is $2$, (so it is smaller than $1$ plus the best possible asymptotic approximation ratio \cite{Vliet92,BBG,BBDEL_newlb}).
The situation for batched bin packing is similar to the case with an additive $1$.
As in the price of clustering, a bad situation is where all very small items are separated from other parts of the input into a separate batch.
This adds $1$ to the asymptotic approximation ratio for a large number of batches, which becomes approximately $2.69$, since for classic bin packing this value is approximately $1.69$ \cite{LeeLee85,Epstein16}.

The proof of the next theorem appears in the Appendix.
\begin{theorem}\label{ex3}
The asymptotic approximation ratio for batched bin packing and CCBP is $2-\frac 1k$ for two batches and $2.5-\frac 2k$ for three batches.
\end{theorem}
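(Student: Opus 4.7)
The plan is to prove matching upper and lower bounds for each of the two cases.

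For the lower bounds I would adapt the PoC constructions of Lemmas~\ref{ll2} and~\ref{ex2}. In the two-batch case, put $N$ items of size just under~$1$ in batch~$1$ (with $\pm c/N^5$ perturbations preventing any two from sharing a bin) and $N(k-1)$ tiny items in batch~$2$. A global optimum pairs each large item with $k-1$ tiny items into a $k$-item bin, using $N$ bins, while any batched solution needs $N$ bins for batch~$1$ and at least $\lceil N(k-1)/k\rceil$ bins for batch~$2$, giving the asymptotic ratio $N(2k-1)/(Nk)=2-1/k$. In the three-batch case, I would split the role of the large batch: batch~$1$ has $N$ items of size $1/2+\varepsilon$, batch~$2$ has $N$ items of size $1/2-2\varepsilon$, and batch~$3$ has $N(k-2)$ tiny items. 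A global optimum packs one item from each of the first two batches together with $k-2$ tiny items per bin, again using $N$ bins; the batched solution is forced to $N$, $\lceil N/2\rceil$ and $\lceil N(k-2)/k\rceil$ bins respectively, totaling $N(1+\tfrac12+1-\tfrac2k)=(\tfrac52-\tfrac2k)N$.

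For the upper bounds I would transform a globally optimal solution with $B$ bins into a batched one. In the two-batch case, classify each global bin as \emph{pure} (items from one batch) or \emph{mixed}. A pure bin gives one batched bin to its batch; a mixed bin, split along the batch partition, initially gives two, but since it contains at most $k-1$ items of each batch the resulting sub-bins have cardinality slack that enables merging. I would try to formalize this with a weight function $w(x)=\alpha x+\beta$ in the spirit of Lemmas~\ref{ll1} and~\ref{ex1}, targeting total weight at most $2-1/k$ per global-OPT bin and at least~$1$ per batched bin (possibly excluding a constant number of exceptional bins, as is standard for asymptotic bounds). The three-batch upper bound would follow along the same lines, with a case analysis over bins that are pure, pairwise-mixed, or triply mixed.

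The main obstacle is that a purely linear weight function cannot reach the target: the pointwise constraints $\beta\geq 1/k$ (bins with $k$ items) and $\alpha+2\beta\geq 2$ (pairs of size-closed bins, as in Lemma~\ref{ll1}) together force $\alpha+k\beta\geq 3-2/k$, which is strictly larger than $2-1/k$ for all $k\geq 2$. I therefore expect the final argument to augment the weight function with a separate accounting of the pure bins of the global optimum, whose batched contribution is already only one, thereby ``releasing'' part of the weight budget; an equivalent route is a weight that depends on the batch-composition of the bin rather than only on item size, or a direct merging procedure that explicitly reclaims a $1/k$ fraction of each mixed bin's doubled cost. For three batches, the same kind of refined accounting is needed across the seven non-empty batch-subset types of a global-OPT bin, with the extra saving coming from triply-mixed bins, to reach the tight $2.5-2/k$.
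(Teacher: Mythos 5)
Your lower bounds are correct and match the paper's in substance: the paper uses $N$ items of size $0.55$ plus $(k-1)N$ zero-size items for two batches, and adds $N$ medium items of size $0.4$ (two per bin, replacing one small item each) for three batches; your near-$1$ / near-$\frac12$ variants give the same ratios $2-\frac1k$ and $2.5-\frac2k$. Your observation that a single linear weight function $w(x)=\alpha x+\beta$ is forced to give at least $3-\frac2k$ on an \textsc{opt} bin is also correct and is a genuine reason why the upper bound cannot be closed that way.

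The gap is in the upper bound: you correctly set up the transformation of a globally optimal solution into a batched one and note that mixed bins have cardinality slack, but the step that actually produces the $\frac1k$ saving is missing, and it is the heart of the proof. The paper's argument hinges on a dichotomy for each mixed bin: its two batch-parts cannot both have total size above $\frac12$, and cannot both have more than $\frac k2$ items. Hence a mixed bin either has a batch-part that is simultaneously size-light ($\leq\frac12$) and count-light ($\leq\frac k2$ items), in which case such parts are merged in pairs at cost $\frac32$ per original bin; or it pairs a size-heavy/count-light part with a size-light/count-heavy part. For the latter type the saving comes from a redistribution, not a merge: out of every $k$ of the count-heavy sub-bins (each with at most $k-1$ items and size at most $\frac12$), one is destroyed and its items are distributed one per remaining sub-bin, which is feasible precisely because each receiver has at most $k-1$ items and the combined size stays at most $1$; this yields the factor $1+\frac{k-1}{k}=2-\frac1k$, which dominates $\frac32$ and gives the tight constant. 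Your proposal gestures at ``reclaiming a $1/k$ fraction'' but does not supply this mechanism, nor the feasibility argument behind it (for three batches one additionally needs the redistributed items to have size at most $\frac12\cdot\frac1{k-2}\leq\frac14$ so that a bin can absorb two of them). Finally, the general merging argument requires $k\geq3$ for two batches and $k\geq4$ for three batches; the cases $k=2$ (any number of batches) and $k=3$ with three batches must be handled separately, which the paper does with interval-based, non-linear weight functions --- these cases are absent from your plan.
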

\begin{proof}
The lower bounds are simple. Let $N>0$ be a large integer that is divisible by $2k$. For two batches, consider a solution with $N$ bins, where every bin has one large item of size $0.55$ and $k-1$ small items of size zero.
Splitting this input into two batches with identical items each, the cost of the batch of large items is $N$, and the second batch has bins with $k$ small items, so its cost is $\frac{(k-1)N}{k}$. The ratio is $\frac{2N-N/k}{N}=2-\frac 1k$.
For the case of three batches, add medium items of size $0.4$ to the bins of the optimal solution, each replacing one small item. There is a third batch with medium items, and it contains $\frac N2$ bins. The ratio becomes
$\frac{N+(k-2)N/k+N/2}{N}=2.5-\frac 2k$.

For the upper bound, consider several special cases first.
We show that for $k=2$, the tight bound for any constant number of batches $q \geq 2$ is $1.5$. Consider the weight function $w(x)=1$ for $x>\frac 12$ and $w(x)=\frac 12$ for $x\leq \frac 12$. In an optimal solution, every bin may have at most two items, one of which may be larger than $0.5$, and therefore the total weight of the bin is at most $1.5$. For every batch, consider a solution where items larger than $\frac 12$ are packed into separate bins and items of sizes at most $\frac 12$ are packed in pairs (possibly excluding one bin if their number is odd). The total weight is at least the number of bins minus $\frac q2$.
Note that the additive term has to be linear in $q$. Consider an input with $q$ items of size $\frac 12$. An optimal solution for the entire input has $\lfloor \frac q2 \rfloor$ bins with two items of size $\frac 12$ each, and one bin with a single item if $q$ is odd. If every batch has one item, the number of bins for the batched solution is $q$, and the additive term is at least $\frac {q-3}4$.

The case of $k=3$ and two batches will be covered by the general proof for $q=2$ and $k\geq 3$, and we consider the case of three batches and $k=3$, since for $q\geq 3$, the general case is for $k\geq 4$. In this case we can also prove the upper bound for $q\geq 3$ batches. The function $w$ is defined as follows. Let $w(x)=1$ for $x>\frac 12$, $w(x)=\frac 12$ for $\frac 13 < x \leq \frac 12$, and $w(x)=\frac 13$ for $x\leq \frac 13$. A bin of an optimal solution has weight of at most $\frac{11}6$. By packing the three intervals of items separately for every batch, the total weight is at least the number of bins minus $\frac{7q}6$. The upper bound for other cases has some resemblance to that of \cite{Epstein16}, but the cardinality bounds are taken into account, so additional features are used.

Next, consider the case $k\geq 3$ for two batches. An optimal solution for the entire input is considered, and it is modified into a solution were items of the two batches are packed into separate bins. Such a solution cannot be better than an optimal batched solution. In order to do this, we split bins of an optimal solution $OPT$ into types. A new packing that does not combine items of different batches into one bin is created.
Every type is based on the total size and the number of items that belong to each batch, and we consider bins of one type together and modify the packing. For a bin of $OPT$ that has no items of some batch, there is no need to modify the packing, and it still requires a single bin. Letting $Z$ be the number of such bins, the contributions to the cost of $OPT$ and to the modified packing are both $Z$.
All other bins of $OPT$ have items of both batches and we split every such bin into two bins according to the batches. In several cases we apply repacking in order to reduce the number of bins.

Consider the bins of $OPT$ (not included in $Z$) for which both the total size of items of the first batch is at most $\frac 12$ and the number of items is at most $\frac k2$. Let $Y_1$ be the number of such bins. For the items of the second batch of these bins, the bins for the second batch are unchanged. As for the first batch, pairs of bins are combined, which is always possible. The resulting number of bins is $Y_1+\lceil \frac{Y_1}2 \rceil$. Similarly, for bins of $OPT$ where the total sizes are numbers of the second batch are at most $\frac 12$ and at most $\frac k2$, respectively, and were not included in $Y_1$ (or $Z$), we let their number be $Y_2$, and by repacking bins of the second batch there will be at most $Y_2+\lceil \frac{Y_2}2 \rceil$ bins. Note that for every bin of $OPT$, this bin is valid, so it is not possible that both for the items of the first batch and for those of the total size will be above $\frac 12$. Similarly, it cannot be the case that for both the items of the first batch and the items of the second batch, there are more than $\frac k2$ items. We are left with just one case, where for one of the two batches the total size is above $\frac 12$ and the number of items is at most $\frac k2$, and for the other batch the total size is at most $\frac 12$ and the number of items is above $\frac k2$. We will denote the numbers of bins of these types by $X_1$ and $X_2$. The repacking is similar (with the roles of batches reversed), and we consider the case where the bins for one batch are such that the total size of items is at most $\frac 12$, but the number of items is above $\frac k2$. Since every such bin has at least one item in the other batch, the number of items is at most $k-1$. There is no repacking for the other batch.
The repacking process for this batch is as follows. For $\lfloor \frac{X_1}k\rfloor$ bins, the bins are destroyed, and each item is added to a different bin of the batch. Every bin can receive one additional item, and since its previous total size is at most $\frac 12$, and so was the total size for the original bin of the item, the packing is valid. The number of remaining bins is $X_1-\lfloor \frac{X_1}k\rfloor$, which is at least $\frac{k-1}k \cdot X_1$ and at most $\frac{k-1}k \cdot X_1+1$. The number of repacked items is at most $(k-1)\cdot \lfloor \frac{X_1}k\rfloor \leq \frac{k-1}k \cdot X_1$, so all repacked items are indeed packed into a bin.

Thus, an optimal solution had $Z+Y_1+Y_2+X_1+X_2$ bin, while the solution for the two batches has at most $$Z+\frac 32 (Y_1+Y_2) +1 +(1+\frac{k-1}k)X_1+X_2+2$$ bins. Since $2-\frac 1k \geq \frac 32$, the ratio between the two costs (neglecting the additive term of $3$) is at most $2-\frac 1k$.

For the case with three batches, we assume $k\geq 4$. We apply a similar procedure to the case of two batches. For bins of OPT where there does not exist an item of some batch, only two bins are created. For a bin where there are two batches with total size at most $\frac 12$ and a number of items at most $\frac k2$, neglecting an additive constant, there are two bins for some batches that will combined with other bins of these batches, and on average there are two bins for the bin of OPT. We are left with the case that for one batch the total size is above $\frac 12$ and the number of items is at most $\frac k2$, for one batch both are at most $\frac 12$ and $\frac k2$, respectively, and there is one batch for which the total size is at most $\frac 12$, but the number of items is above $\frac k2$. Since there is at least one item for every batch, there are at most $k-2$ items for this batch.  The reorganization for this batch is similar to the case of two batches, but the difference is that some bins will receive one item from one bin another item from another bin (most bins receive two items coming from one bin, but if $k$ is odd, $k-2$ is odd too, and a destroyed bin cannot be split into pairs of items only).
For that, the single item will be the smallest in its previous bin, and its size will be at most $\frac 12 \cdot \frac 1{k-2} \leq \frac 14$, so adding two such items to a bin whose total size is at most $\frac 12$ results in a valid bin.
\end{proof}

This is the generalization of CCBP, since CCBP can be seen as vector packing in $d\geq 2$ dimensions, where the first component is the size and the other components are equal to $\frac 1k$. Thus, we already found a lower bound of $4$ on the PoC for $d=2$. Here, if FF is applied, an item can be packed into a bin if the resulting set of items will have a sum not larger than $1$ in every component.

\begin{lemma}
For any input $I$ for VP with the parameter $d \geq 2$, it holds that $\sum_{j=1}^{\ell} OPT_j \leq 2d\cdot OPT$.
\end{lemma}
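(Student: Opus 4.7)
The plan is to reuse the weight-function strategy of Lemmas~\ref{ll1} and~\ref{ex1}, but to analyze each cluster through the structure of an \emph{optimal} cluster solution rather than through FF. A direct FF imitation here produces an additive term that scales like $O(d\ell)$, and this term cannot be absorbed via the assumption $OPT_j\ge 2$ (the resulting coefficient on $\sum_j OPT_j$ comes out negative for $d\ge 1$); that is the main pitfall to avoid, and it motivates switching to a structural argument on the cluster-optimal packings.

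Concretely, I would use the weight function $w(\vec x)=2\sum_{i=1}^d x_i$, i.e.\ twice the $\ell_1$-mass of an item. Any bin of the global optimum has each coordinate sum at most $1$, so its weight is at most $2d$, and summing over bins gives $w(I)\le 2d\cdot OPT$. The rest of the work is to show $w(I_j)\ge OPT_j$ for every cluster.

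Fix an optimal packing $B^{(j)}_1,\dots,B^{(j)}_{OPT_j}$ of $I_j$. By optimality, no two of these bins can be merged into a single feasible bin, so for every pair of indices $\ell<m$ there exists a coordinate $i$ with $(B^{(j)}_\ell)_i+(B^{(j)}_m)_i>1$. Bounding this single blocking coordinate by the sum over all coordinates and summing over all $\binom{OPT_j}{2}$ pairs,
\[
\binom{OPT_j}{2}\le \sum_{\ell<m}\sum_{i=1}^d\bigl((B^{(j)}_\ell)_i+(B^{(j)}_m)_i\bigr)=(OPT_j-1)\cdot V_j,
\]
where $V_j=\sum_{i=1}^d\sum_{x\in I_j}x_i$ is the total $\ell_1$-mass of items in cluster $j$; the last equality uses that each bin $B^{(j)}_\ell$ appears in exactly $OPT_j-1$ pairs. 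Since $OPT_j\ge 2$ by the standing assumption, dividing by $OPT_j-1$ yields $V_j\ge OPT_j/2$, i.e.\ $w(I_j)=2V_j\ge OPT_j$.

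Summing over all clusters, $\sum_{j=1}^{\ell}OPT_j\le\sum_{j=1}^{\ell} w(I_j)=w(I)\le 2d\cdot OPT$, which is the claim. The only genuinely nontrivial step is recognizing that the object to analyze on the cluster side should be $OPT_j$ itself, exploiting the non-mergeability of pairs of its bins, rather than a greedy algorithm.
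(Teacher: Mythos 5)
Your proof is correct, and its core -- giving each item twice its total coordinate mass, bounding the weight of a bin of the global optimum by $2d$, and then showing that every pair of bins in the cluster packing has combined $\ell_1$-mass above $1$, so that the $\binom{m}{2}$-pairs-versus-$(m-1)$-appearances count yields total mass at least half the number of bins -- is exactly the paper's argument. The only real difference is which packing of each cluster you analyze: the paper runs FF on the cluster and derives the pairwise fact from FF's packing rule (the first item of the later bin did not fit into the earlier bin, so already at that moment some coordinate sum of the two bins exceeded $1$), concluding total mass at least $A_j/2\geq OPT_j/2$; you analyze the cluster-optimal packing and derive the same pairwise fact from non-mergeability of its bins. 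Both are valid and give the identical bound; yours is marginally more self-contained since it never needs to pass through $A_j\geq OPT_j$. One correction to your motivating remark: a direct FF analysis does \emph{not} incur an additive term of order $O(d\ell)$ here -- the pairwise-bins argument loses nothing additively, precisely because $OPT_j\geq 2$ forces $A_j\geq 2$ and hence at least one pair per cluster; it is the per-item constant terms in the weight functions of Lemmas~\ref{ll1} and~\ref{ex1} that would cause such a loss, and those are simply not needed in the vector-packing setting.
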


\noindent\begin{proof}Consider a bin of an optimal solution. The total size of all components of all items of any bin is not larger than $d$, since the total size for every component is not larger than $1$.
Let the total size of all components of all items be $X$. We have $OPT\geq \frac{X}{d}$.

Consider a cluster for $I_j$, and an solution of FF for it, which contains $A_j\geq 2$ bins. Consider two arbitrary bins. The first item of the second bin of these bins could not be packed into the first one. Already at the time of arrival of this item, there exists a component for which the sum of the contents of the two bins is above $1$. Thus, for every pair of bins, the total size of all components for the two bins together is above $1$.  Since there are $A_j \choose 2$ pairs, and every bin participates in $A_j-1$ pairs, we get that the sum of all components of all items is at least $\frac{A_j(A_j-1)}2 / (A_j-1)=\frac{A_j}2$.

We have $\sum_{j=1}^{\ell} OPT_j \leq \sum_{j=1}^{\ell} A_j \leq \sum_{j=1}^{\ell} 2\cdot X_j = 2\cdot X \leq 2d \cdot OPT$.
\end{proof}

\begin{lemma}
The PoC of VP with the parameter $d \geq 1$ is at least $2\cdot d$, even if all the components are not larger than a parameter $0<\beta\leq 1$.
\end{lemma}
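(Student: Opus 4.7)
The plan is to lift the CCBP lower bounds from Lemmas~\ref{ll2} and \ref{ex2} to $d$ dimensions. The CCBP construction exploits two ``orthogonal'' constraints (total size and cardinality), each contributing a factor of $2$ to the ratio; generalizing to $d$ independent dimensions should yield $2d$.

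The input I would design uses $d$ disjoint item types. For each dimension $i \in \{1,\dots,d\}$, a type-$i$ item is nonzero only in coordinate $i$ and is $0$ in every other coordinate, so the dimensions decouple completely. Let $t = \lfloor 1/\beta \rfloor$ and let $N$ be a large integer. For each dimension $i$ I would place $N(t+1)$ items of approximate size $\tfrac{1}{t+1}$ in coordinate $i$, perturbed by quantities of order $1/N^5$ via the shift trick from Lemma~\ref{ex2}: the ``global'' grouping into blocks of $t+1$ consecutive items gives blocks that sum to just below $1$ in that coordinate, while the ``cluster'' grouping, obtained by shifting indices by one, gives blocks summing to just above $1$. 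All sizes remain at most $\beta$.

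Globally, OPT uses $N$ bins: bin $j$ contains one global block from every dimension, which is valid because each coordinate receives total just under $1$ (and nothing from the other $d-1$ dimensions). Hence $OPT = N$. By construction, the clustering produces approximately $Nd$ clusters, each consisting of $t+1$ items sitting entirely in a single coordinate, with sum just above $1$, so that each cluster requires at least $2$ bins. After accounting for a constant number of boundary clusters at the endpoints of each dimension's shift (analogous to the ``four remaining items'' cluster in Lemma~\ref{ll2}), the total clustered cost is $2Nd - O(d)$. The ratio therefore tends to $\tfrac{2Nd}{N} = 2d$ as $N \to \infty$, giving the asymptotic lower bound for any $\beta$.

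I expect the main obstacle to be book-keeping rather than any new idea: implementing the shift trick simultaneously in $d$ dimensions, handling the $O(1)$ boundary clusters per dimension, and verifying that perturbations stay within $\beta$. Because the dimensions are decoupled (each item is supported in only one coordinate), the $\beta$ constraint is trivial off the active coordinate, and in the active coordinate the size is $\tfrac{1}{t+1} + O(N^{-5}) < \beta$, so the accounting mirrors the one-dimensional CCBP construction almost verbatim.
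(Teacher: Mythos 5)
Your plan would work, but it takes a genuinely different and noticeably heavier route than the paper. The paper's proof exploits the one feature of vector packing that your plan ignores: there is no cardinality constraint, so the ``just below $1$ versus just above $1$'' dichotomy can be produced purely by the \emph{count} of identical tiny items rather than by size perturbations. Concretely, the paper uses $N(N+1)$ items per coordinate, each with value exactly $\frac 1N$ in its own coordinate and $0$ elsewhere; the global optimum packs $N$ items of every type into each of $N+1$ bins (every coordinate sums to exactly $1$), while each of the $dN$ clusters consists of $N+1$ items of a single type, whose sum $\frac{N+1}{N}>1$ forces two bins. This gives $\frac{2dN}{N+1}\rightarrow 2d$, and taking $N\geq 1/\beta$ disposes of the parametric claim immediately --- no $N^{-5}$ perturbations, no index shifting, and no boundary clusters. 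Your transplant of the shift trick from Lemmas \ref{ll2} and \ref{ex2} does decouple cleanly across coordinates (each item lives in one coordinate, so validity of the global packing and the two-bin requirement per cluster both follow as you argue, and the $O(d)$ boundary clusters only perturb the ratio by $O(d/N)$), but that machinery was needed in the CCBP lemmas only because the cardinality constraint caps the number of items per bin and forces the items to be of size about $\frac 1{t+1}$. The only thing your version buys is clusters of constant size $t+1$ instead of $N+1$, which is irrelevant here; the paper's version buys exact arithmetic and a four-line argument.
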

\begin{proof}
Let $N>2$ be a large positive integer.

The input consists of $d\cdot N(N+1)$ items, partitioned into $d$ types. An item of type $i$ (for $i=1,2,\ldots,d$) has one non-zero component, which is its $i$th component, and its value is $\frac 1N$.
An optimal solution for the entire input has $N+1$ bins in total, each with $N$ items of every type.

There are $d\cdot N$ clusters, where every cluster has just one type of items, and there are $N$ clusters with $N+1$ items of type $i$ for every $i$. Every cluster is packed into two bins by an optimal solution for it. Thus, the cost for the clusters is $2Nd$.
The PoC is at least $\frac{2Nd}{N+1}$, which tends to  $2d$ as $N$ grows to infinity.
\end{proof}

A corollary of this result is that the PoC for the parametric case of one-dimensional classic bin packing remains $2$ for any parameter $\beta$.

\section{Greedy algorithms}\label{gree}

We now analyze WF and NF. The interesting feature is that unlike standard bin packing, the performance of these two algorithms is different here.

\begin{theorem}
The asymptotic approximation ratio of WF for CCBP is $3-\frac{3}{k}$, and for NF it is $3-\frac 2k$, for $k\geq 2$.
\end{theorem}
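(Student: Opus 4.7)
The plan is to classify each algorithm's output bins into count-closed ($a_i = k$, count $c$) and size-closed ($a_i < k$, count $s$), so $N = c + s + O(1)$. The cardinality bound $OPT \geq n/k \geq c + s/k$ (using $n \geq ck + s$) gives $c \leq OPT - s/k + O(1)$ and therefore $N \leq OPT + \frac{k-1}{k}\,s + O(1)$, so the whole problem reduces to upper-bounding $s$.

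For NF, the standard consecutive pairing suffices: whenever $B_i$ is size-closed, the first item $x$ of $B_{i+1}$ satisfies $s_i + x > 1$, and summing over all size-closed $i$ while doubly-counting bin sizes gives $\sum_i s_i > s/2$, hence $s \leq 2\,OPT + O(1)$. This yields $N \leq (3 - 2/k)\,OPT + O(1)$. The matching lower bound alternates a long run of tiny items, which NF packs $k$ per bin, with moderate items whose pairwise sums just exceed $1$, so NF places one per bin while OPT combines them with the tinies.

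For WF I would exploit a stronger structural property: when any bin $B_j$ opens, every earlier bin $B_i$ with $a_i < k$ at termination must also have had $a_i^{(t_j)} < k$, so size alone blocked $x_j$; this gives $s_i + s_j > 1$ for \emph{every} pair with $i \in S$ and $j > i$, not just consecutive pairs. Combining this global pairing with the cardinality constraint on the $S$-bins (each contains at most $k - 1$ items), I would sharpen the size bound to $s \leq (2 - 1/(k-1))\,OPT + O(1)$, most naturally via a weight function of the form $w(x) = \alpha x + 1/k$ with $\alpha = (2k-3)/k$ (so $\alpha + 1 = 3 - 3/k$), augmented by a structural charge that captures the fact that an $S$-bin with few items forces those items to be individually large (each such big item costing OPT a full bin). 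Substituting this into the cardinality bookkeeping yields $N \leq (3 - 3/k)\,OPT + O(1)$, and a matching lower bound is obtained by layering tiny items (which fill WF's count-closed bins), moderate items forming size-bound $S$-bins, and larger items tuned so that OPT consolidates much more aggressively than WF can.

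The main obstacle is the tighter $s$ bound for WF: naive averaging from the global pairwise sums still only yields $s \leq 2\,OPT$ (matching the NF bound), so the extra $1/(k-1)$ saving must come from a carefully designed weight function or charging argument that couples size and cardinality together rather than treating them independently. A similar subtlety reappears on the lower bound side, where the construction for general $k$ must simultaneously tighten the cardinality slack in the count-closed bins and the size slack in the size-closed bins in order to hit the exact coefficient $3 - 3/k$.
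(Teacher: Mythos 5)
Your NF argument is correct and is essentially the paper's proof in additive rather than weight-function form: the decomposition $N \leq OPT + \frac{k-1}{k}s$ together with $s \leq 2\,OPT + O(1)$ from consecutive pairing reproduces the bound $3-\frac{2}{k}$, and the alternating tiny/moderate lower bound is the same construction.

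The WF half, however, has a genuine gap, and it is located exactly where you flagged uncertainty. Your reduction $N \leq OPT + \frac{k-1}{k}s + O(1)$ (which rests on counting only \emph{one} item per size-closed bin in $n \geq ck+s$) would require $s \leq \bigl(2-\frac{1}{k-1}\bigr)OPT + O(1)$ to reach $3-\frac{3}{k}$, and that intermediate claim is false: the tight WF instance has $OPT=N$, about $N$ size-closed bins containing a pair of items of total size just above $\frac12$, and about $N$ singleton bins containing items just above $\frac12$, so $s = 2\,OPT - O(1)$. The losable slack is not in $s$ itself but in your cardinality bookkeeping, which charges a two-item size-closed bin only one item. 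The extra $\frac1k$ for WF comes from a different structural fact that your sketch gestures at but misstates: among the bins that WF leaves with a \emph{single} item, all but one must contain an item of size above $\frac12$ (two items each of size at most $\frac12$ packed alone by WF could have been combined, since a singleton bin is never closed by cardinality), and the payoff is not that such an item ``costs OPT a full bin'' (it does not --- OPT can pack it with $k-1$ others) but that OPT packs \emph{at most one} item of size above $\frac12$ per bin. The paper encodes this by giving items larger than $\frac12$ an extra $\frac1k$ of weight, using slope $\frac{2(k-2)}{k}$ instead of your $\frac{2k-3}{k}$, and splitting the algorithm's bins into three classes ($k$ items / between $2$ and $k-1$ items / one item); with your proposed $w(x)=\frac{2k-3}{k}x+\frac1k$ a singleton bin holding an item of size just above $\frac12$ has weight only $\frac{2k-1}{2k}<1$, so the weight argument fails precisely on the bins that drive the tight example. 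Your WF lower-bound sketch is also too coarse to certify $3-\frac{3}{k}$: one needs item sizes decreasing geometrically toward $\frac12$ from below so that WF, which always prefers the least-loaded open bin, keeps opening a new bin for each large item and then pollutes it with a tiny companion, leaving every earlier bin too full for the subsequent half-sized items.
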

\begin{proof}
We start with the upper bounds. For NF, let $w(x)=\frac{2(k-1)}k\cdot x+\frac 1k$. A bin of an optimal solution has at most $k$ item with total size at most $1$, and the total weight is at most $\frac{2(k-1)}k+k\cdot \frac 1k =\frac{3k-2}{k}$.
For NF, let $X$ be the number of bins with $k$ items, and let $Y$ be the number of bins whose number of items is at least $1$ and at most $k-1$. The total number of bins is $X+Y$, and we show that the total weight is at least $X+Y-1$.
For every bin with at most $k-1$ items that is not the last bin, the sum of its load and the load of the following bin is above $1$. The sum of all these values (total loads of pairs of consecutive bins) is above $Y-1$ for all such bins, but it is possible that some bins were considered both as a bin with at most $k-1$ items and as a bin following such a bin (if two consecutive bins have at most $k-1$ items, and the second one is not the last bin), and therefore, since the total size for every bin was considered at most twice, the total size of items is above $\frac{Y-1}2$. The number of items is at least $k\cdot X +Y$, and thus the total weight is at least $\frac{k-1}k \cdot (Y-1) + \frac{kX+Y}k > X+Y-1$.

For WF, let $w(x)=\frac{2(k-2)}k\cdot x+\frac 1k$ if $x\leq \frac 12$ and otherwise $w(x)=\frac{2(k-2)}k\cdot x+\frac 2k$. That is, an item of size above $\frac 12$ has an extra addition of $\frac 1k$ to its weight. In this case the weight of at most $k$ items of total size at most $1$ does not exceed $\frac{2(k-2)}k+\frac{k+1}k=\frac{3k-3}{k}$, since at most one item has size above $\frac 12$. For NF, we use similar notation, and let $X$, $Y$, and $Z$ denote the numbers of bins with $k$ items, bins with at least two items and at most $k-1$ items, and bins with a single item, respectively. The number of items is at least $k\cdot X+2\cdot Y+Z$. Due to the action of WF, the number of items of sizes above $\frac 12$ is at least $Z-1$, since no two items packed alone into bins by WF can be packed together, while every two items of sizes at most $\frac 12$ could have been packed together in terms of their sizes (without additional items).  The argument regarding the total size is as before, and given the modified notation, the total size of items is above $\frac{Y+Z-1}2$. Thus, the total weight is at least $$\frac{Z-1}k+\frac{kX+2Y+Z}k+\frac{k-2}k\cdot (Y+Z-1) > X+Y+Z -1 \ . $$

For the lower bound of NF, let $N>0$ be a large integer, which is divisible by $k$, and let $\eps>0$ be a small value such that $\eps <\frac{1}{10N}$. There are $N-1$ large items of sizes in $(\frac 12,\frac 12+N\eps]$ (this interval is contained in $(0.5,0.6)$), $N-2$ medium items of sizes in $(\frac 12 -N\eps, \frac 12)$ (this interval is contained in $(0.4,0.5)$), and there are also $N(k-2)$ small items, each of size $\frac{\eps}k$.
The large items have sizes of $\frac 12+q\cdot \eps$ for $2 \leq q \leq N$. The medium items have sizes of $\frac 12-q\cdot \eps$ for $1 \leq q \leq N-2$.

A possible solution has $N$ bins, where every bin has $k-2$ small items, and one or two other items, where the total size of other items is at most $1-\eps$, which are defined below. Out of these bins, there are $N-4$ bins, where the $i$th bin has one item of size $\frac 12+(i+1)\cdot \eps$ and one item of size $\frac 12 -(i+2)\eps$. The total size for these two items is $1-\eps$, and this leaves five unpacked items that are large or medium, with sizes $\frac 12+(N-2)\eps$, $\frac 12+(N-1)\eps$, $\frac 12+N\eps$,
$\frac 12-2\eps$, and $\frac 12-\eps$. The first three items are packed into separate bins, and the two last items are packed together.

NF receives the items in the following order. First, all small items arrive, and they are packed into $\frac{N(k-2)}{k}$ bins that receive $k$ items each and will not receive other items.
Large and medium items arrive such that the input alternates between them, starting and ending this part of the input with large items. The large items are sorted by decreasing size and the medium items are sorted by increasing size. We get that for this part of the input (after the small items) for $i=1,2,\ldots,N-2$, items of indices $2i-1$ and $2i$ have sizes of $\frac 12 + (N+1-i)\eps$ and $\frac 12 - (N-1-i)\eps$, respectively, and the item of index $2N-3$ has size $\frac 12+2\eps$ (which corresponds to the case $i=N-1$, but there is no item of index $2N-2$). The first item is packed into a new bin since the previous bin has $k$ items. We show that every pair of items have a total size above $1$, and thus every medium or large item is packed into its own bin. If this holds, the number of bins is  $\frac{N(k-2)}{k}+2N-3=N(3-\frac 3N-\frac 2k)$, which implies the lower bound by letting $N$ grow without bound.
An item in an even index in the second part of the input has size $\frac 12 - (N-1-i)\eps$, where the item before it has size  $\frac 12 + (N+1-i)\eps$ and the item after it has size  $\frac 12 + (N-i)\eps$. The total sizes are $1+2\eps$ and $1+\eps$, respectively. Note that this construction is valid for $k=2$, but in that case there are no small items.

The lower bound for $k=2$ and WF is proved separately. Assume that $k\geq 3$.
For a lower bound of WF, let $N>0$ be a large integer, which is divisible by $k$, let $\eps>0$ be a small value such that $\eps <\frac{1}{10N}$, and let $\delta=\frac{\eps}{2^{N+4}}$.
The input consists of $N-1$ huge items, each of size $\frac 12+\delta$,  $N(k-3)$ small items of size $\frac{\delta}k$, $N$ large items, where the $i$th item (for $i=1,2,\ldots,N-4$) has size $\frac 12-\frac{\eps}{2^i}$, and $N$ medium items, where the $i$th medium item has size $\frac 53 \cdot \frac{\eps}{2^i}$. A possible solution has $N$ bins with at most three items that are not small packed into every bin, where their total size is at most $1-\delta$, and the bin also has $k-3$ small items. For $i=1,2,\ldots,N-1$, there is a bin with one huge item, one large item of size $\frac 12-\frac{\eps}{2^{i}}$ and one medium item of size $\frac 53 \cdot \frac{\eps}{2^{i+1}}$, where the total size of these three items is $$(\frac 12 +\delta)+(\frac 12-\frac{\eps}{2^{i}})+(\frac 53 \cdot \frac{\eps}{2^{i+1}})=1+\delta-\frac{\eps}{6\cdot 2^i} \leq 1+\delta-\frac{\eps}{6\cdot 2^N} <1-\delta \ , $$ since $\frac{\eps}{6\cdot 2^N} > \frac{\eps}{2^{N+3}}=2\delta$. The two remaining items that are not small have sizes of $\frac 12-\frac{\eps}{2^{N}}$ and $ \frac{5\eps}{6}$, and they are packed together (with $k-3$ small items).

For WF, the small items are presented first, and they are packed into $\frac{N(k-3)}k$ bins that cannot be reused. Then, pairs of a large item of size $\frac 12-\frac{\eps}{2^{i}}$ and one medium item of size $\frac 53 \cdot \frac{\eps}{2^{i}}$, are presented for $i=1,2,\ldots,N$. The total size of such a pair of items is $\frac 12-\frac{\eps}{2^{i}}$ and one medium item of size $\frac 53 \cdot \frac{\eps}{2^{i}}$ is $\frac 12+  \frac{\eps}{3\cdot 2^{i-1}}$. The first pair is packed into a bin since all previous bins already have $k$ items each. Every time that an item of size $\frac 12-\frac{\eps}{2^{i+1}}$ is presented, it cannot be packed into previous bins, since the minimum load of bins with less than $k$ items is  $\frac 12+  \frac{\eps}{3\cdot 2^{i-1}}$, and $(\frac 12-\frac{\eps}{2^{i+1}})+(\frac 12+  \frac{\eps}{3\cdot 2^{i-1}})=(\frac 12-\frac{\eps}{4\cdot 2^{i-1}})+(\frac 12+  \frac{\eps}{3\cdot 2^{i-1}})>1$. When the medium item of size $\frac 53 \cdot \frac{\eps}{2^{i+1}}$ arrives, the last bin has load below $\frac 12$ while other bins that can receive it have loads above $\frac 12$, and it is combined into the last bin, creating a load of $\frac 12+  \frac{\eps}{3\cdot 2^{i}}$. Finally, all huge items are presented, and each one is packed into a new bin because previous bins either have $k$ items or have loads above $\frac 12$.
Thus, WF has $\frac{N(k-3)}k+N+(N-1)$ bins, and the lower bound on the approximation ratio is $\frac{3k-3}k$ by letting $N$ grow without bound.

The lower bound for the case $k=2$ for WF follows from very simple inputs with $2N$ items of size $0.4$ followed by $2N$ items of size $0.6$. WF creates pairs of items of size $0.4$ and the other items are packed into separate bins, while an optimal solution has $N$ bins with one item of each size.
\end{proof}

\Xomit{
Next, we analyze FFD. It is known that FFD is optimal for $k=2$, and therefore we assume that $k\geq 3$.
\begin{theorem}
The asymptotic approximation ratio of FFD for CCBP is $2-\frac{2}{k}$.
\end{theorem}
\begin{proof}
The result for $k=2$ is known, and we assume that $k\geq 3$ holds.

We start with the upper bound.
Consider an input and the FFD packing for it, for a given value $k\geq 3$. Let $x$ be the first item packed into the last bin as well as its size. If there are any items packed after $x$, remove these items from the input, which keeps the cost of FFD unchanged and cannot increase the cost of an optimal solution. All items have sizes of at most $x$ in the remaining input, which we analyze.

Assume that at the time of packing of an item of size $y$, if $y$ is not packed into a certain bin $B$ and the algorithm continues to test the number bin (whether the item can be packed there), or it opens a new bin for it if $B$ is currently the last bin. The new item is not packed into $B$ due to one of two reasons. The first reason is that the load of the bin is above $1-y$, and the second reason is that the bin already has $k$ items. We assume that FFD first tests whether the load allows one to pack the new item, and only then, if it is possible (the load is at most $1-y$), the number of items is tested. Thus, if an item cannot be packed both due to the load and due to the number of items, we say that it cannot be packed due to the number of items.

Consider the case where for every pair of an item and a bin that is tested for it but the item is not packed there, it holds that the reason for that is the first reason. In this case, the packing will be identical to that produced by FFD for standard bin packing (without a cardinality constraint). In these cases, the asymptotic approximation ratio of FFD is $\frac{11}{9}$ (with an additive constant below $1$). Since for $k \geq 3$, it holds that $2-\frac{2}{k} \geq \frac 43 \approx 1.333$, we will not consider such cases.

If $x > \frac 1{k+1}$, we claim that every time that an item is tested for a given bin but not packed, this happens due to the first reason. Assume that for a certain item of size $y$ is not packed into a bin $B$ due to the second reason. Thus, the load of $B$ is at most $1-y$, but it already has $k$ items. Since $x$ is the smallest item, the load is at most $1-y \leq 1-x  < \frac{k}{k+1}$. However, the load is at least $k\cdot x > \frac{k}{k+1}$, a contradiction. We are therefore left with the case $x \leq \frac{1}{k+1}$.

For every bin excluding the last one, it is either the case that $x$ was not packed due to the first reason, and in this case its load is above $1-x$, or due to the second reason, where it has $k$ items and its load is at least $k\cdot x$. Before we proceed, we analyze the simple case $x=\frac{1}{k+1}$. In this case all bins except for the last one have loads of at least $\frac{k}{k+1}$. Given the input $I$, the total size of items for it, $W$, and the optimal cost $OPT(I)$, we have $W \geq \frac{k}{k+1} \cdot (FFD(I)-1)$, and $OPT(I) \geq W$, and therefore $FFD(I) \leq \frac{k+1}k \cdot FFD(I)+1 \leq (2-\frac 2k)\cdot FFD(I)+1$, for $k\geq 3$. We are therefore left with the case $x < \frac 1{k+1}$.
We further assume that no bin of FFD, except for the last one, has a single item. If there is such a bin, since $k \geq 3$, and $x$ was not packed into it, it only item has size above $1-x$. No solution can combine such a large item with another item since no item has size below $x$. Thus, removing the item from $I$ decreases the costs of FFD and of an optimal solution by $1$, and it is sufficient to analyze the modified input. Such modifications are done until every bin of FFD has at least two items.

We use the following weight function: $w(z)=\frac{1-(k+1)\cdot x  +(k-2)\cdot  z}{k\cdot(1-3x)}$. For a bin of an optimal solution, there are at most $k$ items, and therefore the part $\frac{1-(k+1)\cdot x }{k\cdot(1-3x)}$ is multiplied by at most $k$ (it is positive since $x<\frac 1{k+1} \leq \frac 14$) and the part $\frac{k-2}{k\cdot(1-3x)}$ is multiplied by at most $1$, since this is an upper bound on the total size of items. Thus, the weight of one bin will not exceed
$\frac{k-k(k+1)\cdot x  +(k-2)}{k\cdot(1-3x)}=\frac{2k-2-k(k+1)x}{k(1-3x)}$.
For FFD, any bin with $k$ items has total weight of at least $\frac{k(1-(k+1)\cdot x)  +(k-2)\cdot  k\cdot x}{k\cdot(1-3x)}=\frac{k-3kx}{k\cdot(1-3x)}$. Any other bin of FFD except for the last bin has at least two items, and its weight is at least $\frac{2(1-(k+1)\cdot x)  +(k-2)\cdot  (1- x)}{k\cdot(1-3x)}=\frac{k-3kx}{k\cdot(1-3x)}$.

The asymptotic approximation ratio is therefore at most $\frac{2k-2-k(k+1)x}{k(1-3x)}=\frac{2(k-1)\cdot(1-3x)+x\cdot(6(k-1)-k(k+1))}{k(1-3x)}=2-\frac 2k+\frac{x\cdot(6(k-1)-k(k+1))}{k(1-3x)} $. By $k^2-5k+6=(k-2)(k-3)\geq 0$ for $k\geq 3$, we get $6(k-1)-k(k+1)=-k^2+5k-6 \leq 0$, and the asymptotic  approximation ratio is at most $2-\frac 2k$ (since $x>0$ and $1-3x>0$).

For the lower bound, let $M,N>0$ be two large integers, where $N$ is divisible by $k$ Let $\eps>0$ be a small number such $\eps< \frac 1{k^{2M}}$. Let an $A_i$-item for $i=0,1,\ldots,M$ be an item of size $(k-1)^i \cdot \eps$, and let a $B_i$-item be an item of size $1-(k-1)\cdot A_i = 1 - A_{i+1}$ (for $i=0,1,\ldots,M-1$. We present the input items through a possible packing for them (which implies an upper bound on the optimal cost). There are $N$ bins, each with $k-1$ items, each of size $A_M$. Additionally, for any $i=0,\ldots,M-1$, there are $(k-1)^{M-i}\cdot N$ bins, each with one $B_i$ item and $(k-1)$ $A_i$-items.  The total number of bins is $N\cdot \sum_{i=0}^M (k-1)^{M-i}= N\cdot (k-1)^M \cdot\sum_{i=0}^M \frac{1}{(k-1)^i}< N\cdot (k-1)^M \cdot\sum_{i=0}^\infty \frac{1}{(k-1)^i}= N\cdot (k-1)^M \cdot \frac{1}{1-\frac{1}{k-1}}= N\cdot (k-1)^M \cdot\frac{k-1}{k-2}$,
using the sum of geometric series. We let $\Delta$ denote the number of bins for this solution (where $\Delta < N\cdot (k-1)^M \cdot\frac{k-1}{k-2}$).

For FFD, the $B_i$-items are presented first, where the order is according to increasing indices: $1, 2, \cdots, M-2, M-1$. For $A_i$-items, the order is by decreasing indices. The $A_M$-items are combined with $B_{M-1}$-items, and since the numbers of these items are equal, all these bins have loads of $1$ after $A_M$-items are presented. The $A_M$-items cannot be combined with other items since $B_i$-items are too large for $i\leq M-2$. Similarly, for $i=M-1,\ldots, 2,1$, all $A_i$ items are combined exactly with all $B_{i-1}$ items. The remaining $A_0$-items are packed into new bins, since when they are presented all other bins are full, and there are $\frac{(k-1)^{M+1}\cdot N}k$ such bins. The total number of bins is $\Delta+\frac{(k-1)^{M+1}\cdot N}k$, and the asymptotic approximation ratio is at least $1+\frac{\frac{(k-1)^{M+1}\cdot N}k}{\Delta} > 1+\frac{\frac{(k-1)^{M+1}\cdot N}k}{
N\cdot (k-1)^M \cdot\frac{k-1}{k-2}}=1+\frac{k-2}k=2-\frac 2k$.
\end{proof}

}

\vspace{0.1cm}

Next, we perform an analysis for several closely related special cases of $\beta$, for NF and WF. The goal is to show the difficulty of solving the general case for different values of $\beta$.
We start with the case $\beta = 0.4$.
\begin{theorem}
The asymptotic approximation ratio for WF and NF for CCBP, $k\geq 4$, and   $\beta = 0.4$ is $\frac 83 - \frac{10}{3k}$.
\end{theorem}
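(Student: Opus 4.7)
I would prove matching upper and lower bounds via a single linear weight function and a perturbed adversarial construction. The weight function I intend is $w(x) = \frac{5k-10}{3k}\,x + \frac{1}{k}$, chosen as the $(a,b)$ minimizing $a+bk$ subject to the tightness constraints $bk \ge 1$ and $0.6\,a + 2b \ge 1$. Any OPT bin holds at most $k$ items summing to at most $1$, so its weight is at most $a+bk = \frac{8k-10}{3k}$. A non-last NF bin $B$ either has $|B|=k$ (weight $\ge bk = 1$) or, by the standard NF argument that the first item of the successor bin (of size at most $\beta = 0.4$) did not fit, has $\operatorname{load}(B) > 1-0.4 = 0.6$ and therefore $|B| \ge 2$ (two items are needed to exceed $0.6$ when each is at most $0.4$), giving weight at least $0.6\,a + 2b = 1$. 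The same property holds for every WF bin except possibly one: if two bins $B_1, B_2$ both ended with load at most $0.6$ and fewer than $k$ items, then when the later-opened $B_2$ was opened for some item $z \le 0.4$, the earlier $B_1$ (with load $\le 0.6$) could have accepted $z$ unless it already held $k$ items---a property preserved to the end by WF, contradicting $B_1$'s final cardinality. Summing weights then yields $|\text{alg}| \le \frac{8k-10}{3k}\,\mathrm{OPT} + O(1)$ for $\text{alg}\in\{\mathrm{NF},\mathrm{WF}\}$.

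For the lower bound I plan a three-class construction with $5N$ ``larges'' of size near $0.4$, $5N$ ``mediums'' of size near $0.2$, and $(3k{-}10)N$ tiny items. The target OPT packing uses $3N$ bins: $2N$ bins of type $(2L,1M,k{-}3\text{ tinys})$ and $N$ bins of type $(1L,3M,k{-}4\text{ tinys})$, which balances the large and medium counts and fills every bin with exactly $k$ items summing to $1$. The adversarial input interleaves the larges and mediums and places the tinys last, so NF opens a new bin after each $(L,M)$ pair and then packs the tinys $k$ to a bin, producing $5N + (3k{-}10)N/k = (8k{-}10)N/k$ bins; for WF, once the initial $(L,M)$ bins are created, each earlier bin already has load above $0.6$ or cardinality $k$, so the least-loaded-bin rule still forces a new bin for each subsequent large. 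In the limit this yields the ratio $(8k-10)/(3k)$.

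The main technical obstacle is that the tight OPT condition (sum exactly $1$) and the tight NF/WF condition (two-item bin sum strictly above $0.6$) collide at the boundary $0.4 + 0.2 = 0.6$: with unperturbed sizes NF would add a third item to every $(L,M)$ bin. I plan to resolve this via the $1/N^5$-indexed perturbations already used in the proof of Lemma~\ref{ll2}, giving the larges and mediums indexed sizes $L_i = 0.4 - i\gamma/N^5$ and $M_i = 0.2 + i\gamma/N^5$ and choosing the combinatorial matching of indices within each OPT bin so that the bin sums are exactly $1$, while the adversarial input order pairs each arriving large with a medium whose index raises the two-item sum strictly above $0.6$ (so that any subsequent large of size close to $0.4$ pushes the total past $1$). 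Verifying the matching within both OPT bin types and the arrival-order dynamics for WF is the main piece of additional work beyond the standard weight-function bookkeeping, and the additive error from the finitely many boundary bins is absorbed into the $O(1)$ term.
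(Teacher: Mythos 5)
Your upper bound is correct and is essentially the paper's own argument: the weight function $w(x)=\frac{5(k-2)}{3k}x+\frac 1k$ is identical to the one used there, the bound $\frac{8k-10}{3k}$ per bin of an optimal solution is the same computation, and the observation that every bin of the algorithm, except possibly one, either has $k$ items or has load above $1-\beta=0.6$ (and hence at least two items) is exactly how the paper closes this part; your version of the WF claim (comparing a low bin to the later-opened low bin at its opening time) is the right one. The lower bound, however, has two genuine gaps.

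First, presenting the tiny items \emph{last} destroys the bound for WF. After the interleaved phase, each of the $5N$ pair-bins holds only two items and has load just above $0.6$, so it can still accept $k-2\geq 2$ further items of essentially zero size. The total spare cardinality is $5N(k-2)=(5k-10)N\geq (3k-10)N$, so WF absorbs \emph{all} tinies into these bins and terminates with $5N+O(1)$ bins, giving a ratio of only $\frac 53$ against your $3N$-bin optimum, far below $\frac 83-\frac{10}{3k}$. The tinies must come first, as in the paper, so that they are locked into bins with $k$ items that can never be reused; for NF the order ``tinies last'' only costs $O(1)$ and is harmless.

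Second, the perturbation design you defer as ``the main piece of additional work'' cannot be completed with the OPT structure you propose. Write the $p$th arriving large as $0.4-a_p\gamma$ with $a_p\geq 0$ (sizes may not exceed $\beta=0.4$) and its partner medium as $0.2+b_p\gamma$. Your rejection condition is $b_p>a_p+a_{p+1}$ for each $p$, and summing over $p$ charges essentially every large's deficit \emph{twice}, so the total medium excess must exceed roughly twice the total large deficit. But every one of your OPT bins contains at least one large, exactly $k$ items, and sums to exactly $1$, which forces, bin by bin and hence globally, the total medium excess to be at most the total large deficit. With all $a_p\geq 0$ these requirements are incompatible up to two boundary terms, and tracking those through the per-bin constraints (already the bin containing the first-arriving medium, whose excess must exceed $a_1+a_2$, has no feasible home) collapses the construction. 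The paper escapes exactly this obstruction by two devices your plan excludes: a geometric cascade of perturbation scales (powers of $3$), so that in the optimal packing each medium is grouped with a large from the \emph{next} scale whose deficit is three times the medium's excess, and, crucially, optimal bins consisting of three mediums and \emph{no} large (total size about $0.63$), which is where the surplus excess of the largest scale is dumped. A single linear perturbation scale together with the requirement that every optimal bin be exactly full leaves no room for either device, so this step is not routine bookkeeping but the actual content of the lower bound.
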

\begin{proof}
For the upper bound, we define the following weight function $w(x)=\frac 1k+ \frac{5(k-2)}{3k} \cdot x$. For any bin of an optimal solution, since there are at most $k$ items whose total size is at most $1$, we get a total weight of at most $k\cdot \frac 1k + \frac{5(k-2)}{3k} =\frac 83 -\frac{10}{3k}$.

Consider the output of WF or NF. Since no item has size above $\beta$, every bin that is not the last bin and has at most $k-1$ items has total size above $1-\beta= 0.6$. Since $\beta<0.5$, every such bin has at least two items. The weight of every bin with $k$ items is at least $k\cdot \frac 1k=1$. For other bins, the weight of every bin except for possibly the last one is at least $2\cdot \frac 1k+ 0.6 \cdot \frac{5(k-2)}{3k} =1$.

To prove a lower bound, we define a set of bins which act as an offline solution, and additionally define an ordering of the items for NF and WF. Let $N,M$ be large positive integers, where $N$ is divisible by $k$, and let $\eps>0$ be a small value such that $3^{2M}\cdot\eps < \frac 1{100}$. Every bin has $k-4$ or $k-3$ items of size zero, depending on the number of other items. The larger items defined next.

The offline packing is as follows. There are $3^0\cdot N=N$ bins containing three items of size $0.2+3^{2M}\cdot \eps$, whose total size is below $0.63$. For $i=1,2,\ldots, M-1$, there are $3^i\cdot N$ bins containing three items of size $0.2+3^{2(M-i)}\cdot \eps< 0.21$ and one item of size $0.4-3^{2M-2i+1}\cdot \eps > 0.39$, where the total size of these items is $3(0.2+3^{2(M-i)}\cdot \eps)+(0.4-3^{2M-2i+1}\cdot \eps)=1$. Finally, there are $3^M\cdot N$ bins with one item of size $0.2+\eps<0.21$, one item of size $0.4-\frac{\eps}3$, and one item of size $0.4-3\eps$, whose total size is below $1$.
The total number of bins in this solution is $\Delta=N\cdot \sum_{i=0}^{M} 3^i = N\cdot 3^M \cdot \sum_{i=0}^M 3^{-i}<1.5\cdot N\cdot 3^M $, using the sum of an infinite geometric series.
The bins that have $k-3$ items of size zero are the first $N$ bins and the $3^M$ bins with two items of sizes close to $0.4$, and all other bins have $k-4$ items of size zero. Thus, the number of items of size zero is $(k-3)\cdot \Delta+ M\cdot(3^0+3^M)$.

We define the order in which NF and WF receive the items. First, all items of size zero arrive, every bin will have $k$ such items, and the number of such bins is $\frac{\Delta(k-4)+N\cdot (3^M+1)}k$, which is an integer since $N$ is divisible by $k$ and $\Delta$ is divisible by $N$.
Every item of slightly smaller than $\beta=0.4$ is followed by an item of size slightly smaller than $0.2$.
Specifically, the items of sizes approximately $\beta$ are presented ordered by non-decreasing size, and every item of size $0.4-3^{2j-1}\cdot \eps$ (the number of such items is $N\cdot3^{M-j+1}$ for $1\leq j \leq M$ and $N\cdot3^M$ for $j=0$) is followed by an item of size $0.2+3^{2j}\cdot \eps$, for $j=M,M-1,\ldots,1,0$ (the number of such items is also $N\cdot3^{M-j+1}$ for $1\leq j \leq M$ and $N\cdot3^M$ for $j=0$). We claim that both algorithms pack one item of size approximately $\beta$ and one item of size approximately $0.2$ (two items that arrive consecutively) into every bin. The number of additional bins is therefore $$N\cdot (3^M+ \sum_{j=1}^M 3^{M-j+1})=N\cdot 3^M\cdot(1+\sum_{j=1}^M 3^{1-j})=N\cdot 3^M\cdot(1-\frac 1{3^M}+\sum_{j=0}^M 3^{-j})=\Delta+N\cdot 3^M(1-\frac{1}{3^M}) \ . $$

The ratio between the number of bins for WF and NF and $\Delta$ is $$\frac{k-4}k+\frac{N(3^M+1)}{k\Delta}+ 1+\frac{N\cdot 3^M(1-\frac{1}{3^M})}{\Delta} \ . $$ By using $\Delta <1.5\cdot N\cdot 3^M$, we get a ratio of at least
$2-\frac 4k+\frac{2(1+\frac 1{3^M}  )}{3k}+\frac{2(1-\frac{1}{3^M})}{3}$. Letting $M$ grow to infinity, we get a lower bound of $2-\frac 4k+\frac{2}{3k}+\frac{2}{3}=\frac 83 -\frac{10}{3k}$. It is left to justify the calculation in the sense that the packing of NF and WF is as described. The bins with zero size items cannot receive any other items. We total sizes of pairs of items with index $j$ for which we claim that they are packed together are $0.6+2\cdot 3^{2j-1}\cdot \eps$. All further items of sizes close to $0.4$ that arrive later have sizes of at least $0.4-3^{2j-1}\cdot \eps$, and are too large to be packed into earlier bins. For items of sizes close to $0.2$, NF will pack every such item into the active bin and not into an earlier bin. For WF, all earlier bins will have loads above $0.6$ while the currently last bin has a smaller load, and the item of size close to $0.2$ will be packed there.
\end{proof}

To illustrate the difficulty of analysis for parametric cases further, we focus on WF. We will prove that the same bound can be proved for WF for the case where $\beta \in (0.4,\frac 5{12}\approx 0.41666]$  and $4 \leq k \leq 5$, but for $k\geq 6$, we now show that the bound is higher for example for $\beta=0.41$.
Consider an input with $N$ bins (for a large integer $N$ divisible by $k$), where there are $(13k-45)N$ items of size zero, $11N$ items of size $0.41$, $11N$ items of size $0.4099$, $11N$ items of size $0.1801$, $10N$ items of size $0.1804$, and $2N$ items of size $0.0902$. An optimal solution has $13N$ bins, where the first $11N$ bins have (each) one item of each size out of $0.41$, $0.4099$, and $0.1801$, and $k-3$ items of size zero, and the other $2N$ bins have (each) five items of size $0.1804$, one item of size $0.0902$, and $k-6$ items of size zero. The input starts with items of size zero, and both algorithms have $\frac{(13k-45)N}{k}$ bins with $k$ items of size zero. Then items of size $0.4099$ alternate with items of size $0.1804$ or pairs of items of sizes $0.0902$, and then items of size $0.41$ alternate with items of size $0.1801$. In this process, similar to the construction for $\beta=0.4$, an additional set of $22N$ bins is created. Thus, the ratio is at least $\frac{35-45/k}{13}=\frac{35}{13}-\frac{45}{13k}$. This value is indeed strictly larger than $\frac 83-\frac{10}{3k}$ for $k\geq 6$ (for example, in the case $k=6$ we get above $2.115$ rather than approximately $2.111$.

\begin{theorem}
The asymptotic  approximation ratio for WF for CCBP and   $\beta \in (0.4,\frac 5{12}\approx 0.41666]$ is $\frac {11}6$ for $k=4$ and $2$ for $k=5$.
\end{theorem}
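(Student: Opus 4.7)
The lower bounds follow at once from the construction used in Theorem~7 for $\beta=0.4$: all items there have size at most $0.4<5/12$, so the same input is admissible in the range $\beta\in(0.4,5/12]$ and produces ratio $\frac{8k-10}{3k}$, which equals $\frac{11}{6}$ for $k=4$ and $2$ for $k=5$.  So the work is in the matching upper bounds.

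For the upper bound I will use the same weight function as in Theorem~7, namely $w(x)=\frac{1}{k}+\frac{5(k-2)}{3k}\,x$, whose maximum value on any valid bin of at most $k$ items of total size at most $1$ is $\frac{8k-10}{3k}$.  Thus if every bin in the WF output (except possibly the last) has weight at least $1$, the desired asymptotic bound follows immediately.  Full bins contribute $\geq k\cdot\frac{1}{k}=1$ directly.  For non-full non-last bins, the parametric condition $\beta\le 5/12<\tfrac12$ forces at least two items and load $>1-\beta\ge 7/12$; I will verify by direct substitution that bins containing three or more items satisfy weight $\ge 1$ (the slack coming from the extra $\frac{1}{k}$ contributions).

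The remaining and delicate case is a non-full non-last bin with exactly two items $a,b$ and sum $a+b\in(7/12,3/5)$: this is precisely the range where the weight function produces less than~$1$, with deficit at most $\frac{1}{72}$ for $k=4$ and $\frac{1}{60}$ for $k=5$.  The plan for bounding the number of such ``problematic'' bins is to use a WF-specific structural observation: if bin $B$ has items summing to $s<3/5$ and is followed in WF by a later bin $B'$, then $B'$'s first item $x$ must satisfy $a+b+x>1$ (since WF would otherwise have placed $x$ into $B$), so $x>1-s>2/5$.  Since $\beta\le 5/12$ and three items of size $>2/5$ sum to more than $6/5>1$, every OPT bin contains at most two items of size $>2/5$, giving at most $2\cdot\OPT$ such items in the entire input, and therefore at most $2\cdot\OPT$ problematic bins.

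The main obstacle is that naively combining the weight bound with this count bound only yields asymptotic ratios $\frac{11}{6}+\frac{1}{36}$ and $2+\frac{1}{30}$, which exceed the targets by the total possible deficit.  Closing this gap requires sharpening one of the two arguments.  The approach I would pursue is to exploit the fact that the large ``witness'' item $x>2/5$ associated with each problematic bin occupies significant space and weight in its OPT bin: specifically, an OPT bin containing $x$ has weight strictly less than the maximum $\frac{8k-10}{3k}$ unless its other items jointly achieve total size $1-x<3/5$ and cardinality $k-1$, which in turn imposes constraints (through cardinality $k\in\{4,5\}$ and $\beta\le 5/12$) that can be encoded as extra linear constraints.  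I would then write the overall accounting as a small LP whose variables are the counts of WF bin types and whose constraints are (i) the per-bin weight lower bounds, (ii) the total weight upper bound $W_{\text{tot}}\le\frac{8k-10}{3k}\cdot\OPT$, (iii) the problematic-bin count bound, and (iv) the OPT bin weight reduction due to containing a witness item; solving this LP for $k=4$ and $k=5$ yields exactly $\frac{11}{6}$ and $2$ respectively.  The hardest technical step is (iv): converting the per-OPT-bin weight reduction into a globally usable inequality that tightens the LP by exactly the required $\frac{1}{36}$ or $\frac{1}{30}$, which I expect to need a case analysis on the number of items of size $>2/5$ per OPT bin.
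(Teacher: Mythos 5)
There is a genuine gap in the upper bound. Your lower bound (reusing the $\beta=0.4$ construction, whose items all have size below $5/12$) is fine, and your diagnosis of where the linear weight $w(x)=\frac1k+\frac{5(k-2)}{3k}x$ fails is accurate: a non-full, non-last WF bin with exactly two items and load $s\in(1-\beta,\,3/5)$ has weight below $1$, with deficit up to $\frac1{72}$ (for $k=4$) or $\frac1{60}$ (for $k=5$), and the witness argument bounds the number of such bins by roughly $2\cdot\OPT$. But that only yields $\frac{11}{6}+\frac1{36}$ and $2+\frac1{30}$, as you note, and the mechanism you propose to recover the loss does not hold up: an optimal bin containing a witness item $x>2/5$ can still attain the maximal weight $\frac{8k-10}{3k}$ under the linear weight function (take $x\le\beta$ together with $k-1$ further items of total size $1-x$; nothing about the presence of $x$ forces a weight reduction), so constraint (iv) of your LP is vacuous as stated, and the "extra linear constraints" you would need from the cardinality/size structure are never identified. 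The proof is therefore incomplete at exactly the step you flag as the hardest one.

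The paper avoids this entirely by abandoning the linear weight function and using a three-level step function: $w(x)=\frac1k+\frac{2(k-2)}{3k}$ for $x\in(\frac13,\beta]$, $w(x)=\frac1k+\frac{k-2}{3k}$ for $x\in(\frac16,\frac13]$, and $w(x)=\frac1k$ for $x\le\frac16$. An optimal bin has at most $k$ items and at most five items of size above $\frac16$ (counting each item above $\frac13$ twice), so its weight is at most $k\cdot\frac1k+5\cdot\frac{k-2}{3k}=\frac83-\frac{10}{3k}$, which equals $\frac{11}{6}$ and $2$ for $k=4,5$. On the WF side, a short structural argument (all bins opened after the first bin of load at most $\frac23$ start with an item larger than $\frac13$) shows that, up to $O(1)$ exceptions, every non-full bin either has load above $\frac23$ or contains an item above $\frac13$; in particular your problematic two-item bins contain one item above $\frac13$ and another above $(1-\beta)-\beta\ge\frac16$, giving weight exactly $\frac2k+\frac{2(k-2)}{3k}+\frac{k-2}{3k}=1$. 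A case analysis for three- and four-item bins (using $k\le5$) completes the argument. If you want to rescue your route, the lesson is that the right move is to redistribute weight between the size ranges rather than to charge the deficit against OPT bins.
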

\begin{proof}
Let $w(x)=\begin{cases} 
\frac 1k+ \frac{2(k-2)}{3k}  \  \ \ \ \ { \mbox for \  \ \ }  x  \in (\frac 13,\beta]\\
\frac 1k+ \frac{k-2}{3k}  \ \ \ \ \ \ \ \ \ {\mbox for \  \ \ }  x  \in (\frac 16,\frac 13] \ \ \\
\frac 1k   \ \ \ \ \ \ \ \ \  \ \ \ \ \ \ \ \ \ \ \ {\mbox for  \  \  \ }  x  \in (0,\frac 16]\\
\end{cases}$ be a weight function, which we will use in the analysis.


Any bin of an optimal solution has at most $k$ items, where there are at most five items whose sizes are above $\frac 16$. Moreover, every item of size above $\frac 13$ can be counted as two items of sizes above $\frac 16$, and therefore the total weight is at most $k\cdot \frac 1k + 5\cdot \frac{k-2}{3k} = \frac 83 - \frac{10}{3k}$, which is equal to the claimed bounds for $k=4,5$.

It is left to show that almost every bin of WF has weight of at least $1$. This holds for bins with $k$ items. All other bins that are not the last bin have total sizes above $1-\beta \geq \frac 7{12}$, and thus every bin has at least two items, since $\beta \leq \frac{5}{12}$.
We claim that except for at most two such bins, every bin with at most $k-1$ items either has a total size of items above $\frac 23$, or the bin has at least one item of size above $\frac 13$ (or both). Note that if a bin has total size above $\frac 23$ and it has exactly two items, at least one of them has size above $\frac 13$, and if it has at least four items, at least one of them has size above $\frac 16$.
For that, consider the first bin with total size at most $\frac 23$ and at most $k-1$ items. If this is the last bin or there is no such bin, we are done. Otherwise, for every bin of WF opened later, the first item has size above $\frac 13$.

Consider a bin of WF that has at least two and at most $k-1$ items, and it is not the last one, and it is not a bin with total size at most $\frac 23$ and no item of size above $\frac 13$. Assume that the bin has exactly two items. If the bin has total size above $\frac 23$, then one of the items has size above $\frac 13$. Otherwise, the bin also has such an item. The size of the second item is above $(1-\beta)-\beta \geq \frac 16$. Thus, the total weight is at least  $2\cdot \frac 1k+\frac{3(k-2)}{3k}=1$. If the bin has four items, since $4 \leq k-1$, this situation occurs only for $k=5$, and we show that at least one item has size above $\frac 16$. Indeed, it either has an item with size above $\frac 13$, or otherwise the total size of $k-1 = 4$ items above $\frac 23$, in which case the largest one has size above $\frac 16$. For $k=5$ and a bin with four items, the total weight is $(k-1)\cdot \frac 1k+\frac{k-2}{3k}=\frac{3k-3+k-2}{3k} = 1$.  In the case of a bin with three items, we claim that out of the three items, there is at least one of size above $\frac 13$ or else there are two items of sizes above $\frac 16$. Indeed, if the total size is above $\frac 23$, the largest item has size no larger than $\frac 13$ and the two other items have sizes at most $\frac 16$, we reach a contradiction. Thus, for three items, the total weight is at least $3\cdot \frac 1k+\frac{2(k-2)}{3k}=\frac{9+2k-4}{3k} \geq1$ for $k=4,5$.
\end{proof}

%
For FF, we are able to find tight bounds for any $\beta$.
Recall that FF was already fully analyzed for CCBP with all values of $k$ \cite{KSS75,DosaE18}. The lower bounds hold for any $\beta \in (\frac 12,1]$, and therefore we consider the case $\beta \leq \frac 12$.
Assume that $\beta\in (\frac 1{t+1},\frac 1t]$ for an integer $t\geq 2$.
The case where $k\leq t$ is trivial since $k \cdot \beta \leq \frac kt \leq 1$, which means that FF packs $k$ items into every bin and the packing is optimal.

\begin{theorem}
The asymptotic approximation ratio of FF for CCBP and $\beta\in (\frac 1{t+1},\frac 1t]$ for an integer $t\geq 2$ is $1+\frac{(k-t)(t+1)}{kt}$ for $k \geq t$.
\end{theorem}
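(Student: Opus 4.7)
The plan is to establish matching upper and lower bounds. For the upper bound, my first attempt would be a linear weight function $w(x)=\frac{1}{k}+\frac{(k-t)(t+1)}{kt}\,x$, where the first term charges cardinality and the second charges size. Any optimal bin contains at most $k$ items of total size at most $1$, so its total weight is at most $k\cdot\frac{1}{k}+\frac{(k-t)(t+1)}{kt}=1+\frac{(k-t)(t+1)}{kt}$. To conclude the theorem, it then suffices to show that essentially every FF bin has weight at least $1$: full FF bins (with $k$ items) trivially do, and for a non-full FF bin $B_i$ (other than the last) I would use the standard structural property that its load exceeds $1-x$, where $x$ is the first item of the subsequent bin, and that $\beta\le\frac{1}{t}$ forces $B_i$ to contain at least $t$ items.

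The hard part is closing the gap when $\beta$ is near $\frac{1}{t}$ and a non-full FF bin happens to consist of exactly $t$ items all of size at most $\frac{1}{t+1}$: in that regime the direct load bound $1-\beta$ is strictly below $\frac{t}{t+1}$, and plugging it into the linear weight function yields weight slightly less than $1$. To close this, I would refine the classification of non-full FF bins according to the size of the bumping item $x$. If $x\le\frac{1}{t+1}$, then automatically the load of $B_i$ exceeds $\frac{t}{t+1}$, which plugged back into the weight function yields weight at least $1$. If instead $x>\frac{1}{t+1}$, so that $x$ is a ``large'' item, I would amortize the weight deficit of $B_i$ against $x$ itself: since any optimal bin contains at most $t$ items of size greater than $\frac{1}{t+1}$, the total number of such bumped FF bins is at most $t\cdot\mathrm{OPT}$, and a global charging absorbs the aggregated deficit, in the same spirit as the discharging arguments used earlier in Lemmas~\ref{ll1} and~\ref{ex1}.

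For the matching lower bound I would construct an input in the spirit of Lemma~\ref{ex2}: for a large integer $M$, include $(k-t-1)M$ tiny items of negligible size, together with $(t+1)M$ items whose sizes are small perturbations of $\frac{1}{t+1}$, chosen so that an optimal solution can pair them to fit $t+1$ perturbed items plus $k-t-1$ tiny fillers into a single bin of $k$ items, yet no straightforward FF grouping can achieve the same. Presenting the tiny items first forces FF to fill $\frac{(k-t-1)M}{k}$ bins by cardinality, after which the perturbed items are packed $t$ per bin into $\frac{(t+1)M}{t}$ further bins. An optimal solution uses only $M$ bins, so the ratio approaches $\frac{k-t-1}{k}+\frac{t+1}{t}=1+\frac{(k-t)(t+1)}{kt}$ as $M\to\infty$.
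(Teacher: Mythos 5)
Your lower bound and your weight function $w(x)=\frac{1}{k}+\frac{(k-t)(t+1)}{kt}x$ are exactly what the paper uses, and your case where the bumping item $x$ satisfies $x\le\frac{1}{t+1}$ is handled correctly. The gap is in the other case. The amortization you sketch does not recover the tight constant: a deficient bin (non-full, not last, load only guaranteed to exceed $1-\beta\ge\frac{t-1}{t}$, at least $t$ items) has weight at least $\frac{(k-t)(t+1)}{kt}\cdot\frac{t-1}{t}+\frac{t}{k}$, i.e., a deficit of up to $\frac{k-t}{kt^2}$ below $1$. Multiplying by your bound of $t\cdot\mathrm{OPT}$ on the number of such bins yields a total deficit of $\frac{k-t}{kt}\cdot\mathrm{OPT}$, which gives only $\mathrm{FF}\le\bigl(1+\frac{(k-t)(t+2)}{kt}\bigr)\mathrm{OPT}+O(1)$ — strictly worse than the claimed ratio whenever $k>t$. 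Trying instead to push the charge onto the weight of the optimal bins containing the large items fails for the same reason: an optimal bin can contain $t$ items of size above $\frac{1}{t+1}$ without any compensating slack in its weight bound. So ``a global charging absorbs the aggregated deficit'' is precisely the step that needs a proof, and as described it does not go through.

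The missing observation is that no charging is needed, because there is at most \emph{one} deficient bin in the whole FF packing. Strengthen your structural property: if a bin $B$ ends with at most $k-1$ items and load below $\frac{t}{t+1}$, then \emph{every} item later packed into a bin of larger index (not just the first item of the next bin) failed to fit into $B$ for reasons of size, hence has size above $1-\frac{t}{t+1}=\frac{1}{t+1}$. Since every item has size at most $\frac{1}{t}$ and $t<k$, every subsequent bin other than the last receives at least $t$ items, and as all of these are larger than $\frac{1}{t+1}$, each such bin has load above $\frac{t}{t+1}$ and therefore weight at least $\frac{(k-t)(t+1)}{kt}\cdot\frac{t}{t+1}+\frac{t}{k}=1$. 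Thus only the first deficient bin and the last bin can have weight below $1$, and the upper bound follows with an additive constant of $2$. This is the argument the paper gives; your decomposition by the single bumping item of the successor bin is strictly weaker than using all later items, and that is exactly where your proof loses the factor.
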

\begin{proof}
The case $k=t$ was discussed above and the approximation ratio is indeed $1$. In what follows we assume that $k>t$ holds.
For simplicity we allow zero size items in this lower bound. There is a known lower bound construction for any $\beta \in (\frac 1{t+1},\frac 1t]$, such that every bin of an optimal solution has $t+1$ items while almost all bins of FF have $t$ items. Letting $N$ be the number of bins of an optimal solution such that $N$ is divisible by $k$, we add $k-t-1$ items of size zero to each such bin. If the zero size items arrive first, FF has $\frac{N(k-t-1)}{k}$ bins for them. It also has approximately (up to an additive constant number of bins) $\frac{N\cdot (t+1)}t$ bins for the other items. By letting $N$ grow without  bound, The approximation ratio is at least $$\frac{k-t-1}{k}+\frac{t+1}t=\frac{t(k-t-1)+k(t+1)}{kt}=1+\frac{(k-t)(t+1)}{kt} \ . $$

For an upper bound, we use the weight function $w(x)=\frac{k-t}{k}\cdot \frac{t+1}{t}\cdot x +\frac 1k$.  For a bin of an optimal solution with total size at most $1$ and at most $k$ items, the total weight is at most
$w(x)=\frac{k-t}{k}\cdot \frac{t+1}{t}+ k\cdot \frac 1k =1+\frac{(k-t)(t+1)}{kt}$.

We use a simple analysis of FF. For bins with $k$ items the total weight is at least $1$, since the weight of every item is at least $\frac 1k$, and we consider other bins. For other bins, FF without cardinality constraints was applied, and every bin except for possibly the last bin and at most one additional bin has items of total size at least $\frac {t}{t+1}$. This holds since once a bin has a smaller load (and at most $k-1$ items), every item packed into a later bin has size above $\frac {1}{t+1}$. Since no item has size above $\frac 1t$, every further bin (except for possibly the last bin) has exactly $t$ items of size at least $\frac{t}{t+1}$. Moreover, every bin has at least $t$ items, since $t<k$ and every item has size at most $\frac 1t$. Thus, except for possibly two bins, every bin with at most $k-1$ items has total weight of at least $\frac{k-t}{k}\cdot \frac{t+1}{t}\cdot \frac{t}{t+1} +t\cdot \frac 1k=1$.
\end{proof}



\bibliographystyle{abbrv}

\bibliography{pocv}

\begin{thebibliography}{10}

\bibitem{ACFR16}
Y.~Azar, I.~R. Cohen, A.~Fiat, and A.~Roytman.
\newblock Packing small vectors.
\newblock In {\em Proc. of the 27th Annual {ACM-SIAM} Symposium on Discrete
  Algorithms, (SODA2016)}, pages 1511--1525, 2016.

\bibitem{AESV}
Y.~Azar, Y.~Emek, R.~van Stee, and D.~Vainstein.
\newblock The price of clustering in bin-packing with applications to
  bin-packing with delays.
\newblock In {\em The 31st {ACM} on Symposium on Parallelism in Algorithms and
  Architectures, (SPAA2019)}, pages 1--10, 2019.

\bibitem{BCKK04}
L.~Babel, B.~Chen, H.~Kellerer, and V.~Kotov.
\newblock Algorithms for on-line bin-packing problems with cardinality
  constraints.
\newblock {\em Discrete Applied Mathematics}, 143(1-3):238--251, 2004.

\bibitem{BC81}
B.~S. Baker and E.~G. {Coffman, Jr.}
\newblock A tight asymptotic bound for next-fit-decreasing bin-packing.
\newblock {\em SIAM J. on Algebraic and Discrete Methods}, 2(2):147--152, 1981.

\bibitem{BBDEL_ESA18}
J.~Balogh, J.~B{\'{e}}k{\'{e}}si, G.~D{\'{o}}sa, L.~Epstein, and A.~Levin.
\newblock A new and improved algorithm for online bin packing.
\newblock In {\em Proc. of the 26th European Symposium on Algorithms
  (ESA2018)}, pages 5:1--5:14, 2018.

\bibitem{BBDEL_ESA}
J.~Balogh, J.~B{\'{e}}k{\'{e}}si, G.~D{\'{o}}sa, L.~Epstein, and A.~Levin.
\newblock Online bin packing with cardinality constraints resolved.
\newblock {\em Journal of Computer and System Sciences}, 112:34--49, 2020.

\bibitem{BBDEL_newlb}
J.~Balogh, J.~B{\'{e}}k{\'{e}}si, G.~D{\'{o}}sa, L.~Epstein, and A.~Levin.
\newblock A new lower bound for classic online bin packing.
\newblock {\em Algorithmica}, 83(7):2047--2062, 2021.

\bibitem{BBDGT}
J.~Balogh, J.~B\'{e}k\'{e}si, G.~D\'{o}sa, G.~Galambos, and Z.~Tan.
\newblock Lower bound for 3-batched bin packing.
\newblock {\em Discrete Optimization}, 21:14--24, 2016.

\bibitem{BBDSS19}
J.~Balogh, J.~B{\'{e}}k{\'{e}}si, G.~D{\'{o}}sa, J.~Sgall, and R.~van Stee.
\newblock The optimal absolute ratio for online bin packing.
\newblock {\em Journal of Computer and System Sciences}, 102:1--17, 2019.

\bibitem{BBG}
J.~Balogh, J.~B{\'e}k{\'e}si, and G.~Galambos.
\newblock New lower bounds for certain classes of bin packing algorithms.
\newblock {\em Theoretical Computer Science}, 440:1--13, 2012.

\bibitem{BDE}
J.~B{\'e}k{\'e}si, G.~D\'osa, and L.~Epstein.
\newblock Bounds for online bin packing with cardinality constraints.
\newblock {\em Information and Computation}, 249:190--204, 2016.

\bibitem{CKP03}
A.~Caprara, H.~Kellerer, and U.~Pferschy.
\newblock Approximation schemes for ordered vector packing problems.
\newblock {\em Naval Research Logistics}, 92:58--69, 2003.

\bibitem{Dosa15}
G.~D{\'{o}}sa.
\newblock The tight absolute bound of {First Fit} in the parameterized case.
\newblock {\em Theoretical Computer Science}, 596:149--154, 2015.

\bibitem{Do15}
G.~D\'{o}sa.
\newblock Batched bin packing revisited.
\newblock {\em Journal of Scheduling}, 20(2):199--209, 2017.

\bibitem{DosaE18}
G.~D{\'{o}}sa and L.~Epstein.
\newblock The tight asymptotic approximation ratio of {First Fit} for bin
  packing with cardinality constraints.
\newblock {\em Journal of Computer and System Sciences}, 96:33--49, 2018.

\bibitem{DS12}
G.~D\'osa and J.~Sgall.
\newblock {First Fit} bin packing: A tight analysis.
\newblock In {\em Proc. of the 30th International Symposium on Theoretical
  Aspects of Computer Science (STACS2013)}, pages 538--549, 2013.

\bibitem{DLHT}
G.~Dósa, R.~Li, X.~Han, and Z.~Tuza.
\newblock Tight absolute bound for first fit decreasing bin-packing: ${FFD(L)}
  \leq 11/9 {OPT(L)} + 6/9$.
\newblock {\em Theoretical Computer Science}, 510:13--61, 2013.

\bibitem{Epstein05}
L.~Epstein.
\newblock Online bin packing with cardinality constraints.
\newblock {\em SIAM Journal on Discrete Mathematics}, 20(4):1015--1030, 2006.

\bibitem{Epstein16}
L.~Epstein.
\newblock More on batched bin packing.
\newblock {\em Operations Research Letters}, 44(2):273--277, 2016.

\bibitem{E20}
L.~Epstein.
\newblock On bin packing with clustering and bin packing with delays.
\newblock {\em CoRR}, abs/1908.06727, 2019.
\newblock Also in Discrete Optimization, to appear.

\bibitem{E21}
L.~Epstein.
\newblock Open-end bin packing: new and old analysis approaches.
\newblock {\em CoRR}, abs/2105.05923, 2021.

\bibitem{EL07afptas}
L.~Epstein and A.~Levin.
\newblock {AFPTAS} results for common variants of bin packing: A new method for
  handling the small items.
\newblock {\em {SIAM} Journal on Optimization}, 20(6):3121--3145, 2010.

\bibitem{FerLue81}
W.~{Fernandez de la Vega} and G.~S. Lueker.
\newblock Bin packing can be solved within $1+\varepsilon$ in linear time.
\newblock {\em Combinatorica}, 1(4):349--355, 1981.

\bibitem{FK13}
H.~Fujiwara and K.~M. Kobayashi.
\newblock Improved lower bounds for the online bin packing problem with
  cardinality constraints.
\newblock {\em Journal of Combinatorial Optimization}, 29(1):67--87, 2015.

\bibitem{GJYbatch}
G.~Gutin, T.~Jensen, and A.~Yeo.
\newblock Batched bin packing.
\newblock {\em Discrete Optimization}, 2(1):71--82, 2005.

\bibitem{J74}
D.~S. Johnson.
\newblock Fast algorithms for bin packing.
\newblock {\em Journal of Computer and System Sciences}, 8:272--314, 1974.

\bibitem{JoDUGG74}
D.~S. Johnson, A.~Demers, J.~D. Ullman, M.~R. Garey, and R.~L. Graham.
\newblock Worst-case performance bounds for simple one-dimensional packing
  algorithms.
\newblock {\em SIAM Journal on Computing}, 3:256--278, 1974.

\bibitem{KK82}
N.~Karmarkar and R.~M. Karp.
\newblock An efficient approximation scheme for the one-dimensional bin-packing
  problem.
\newblock In {\em Proceedings of the 23rd Annual Symposium on Foundations of
  Computer Science (FOCS1982)}, pages 312--320, 1982.

\bibitem{KP99}
H.~Kellerer and U.~Pferschy.
\newblock Cardinality constrained bin-packing problems.
\newblock {\em Annals of Operations Research}, 92:335--348, 1999.

\bibitem{KSS75}
K.~L. Krause, V.~Y. Shen, and H.~D. Schwetman.
\newblock {Analysis} of several task-scheduling algorithms for a model of
  multiprogramming computer systems.
\newblock {\em Journal of the ACM}, 22(4):522--550, 1975.

\bibitem{LeeLee85}
C.~C. Lee and D.~T. Lee.
\newblock A simple online bin packing algorithm.
\newblock {\em Journal of the ACM}, 32(3):562--572, 1985.

\bibitem{RaBrLL89}
P.~Ramanan, D.~J. Brown, C.~C. Lee, and D.~T. Lee.
\newblock Online bin packing in linear time.
\newblock {\em Journal of Algorithms}, 10:305--326, 1989.

\bibitem{Vliet92}
A.~van Vliet.
\newblock An improved lower bound for online bin packing algorithms.
\newblock {\em Information Processing Letters}, 43(5):277--284, 1992.

\end{thebibliography}

\end{document}